\pgfplotsset{compat=newest}
\pgfplotsset{plot coordinates/math parser=false}
\newlength\fheight
\newlength\fwidth
\DeclareMathOperator*{\argmax}{\arg\!\max}
\newtheorem{theorem}{Theorem}
\newacronym{5g}{5G}{fifth-generation}
\newacronym{b5g}{B5G}{beyond-fifth-generation}
\newacronym{6g}{6G}{sixth-generation}
\newacronym{ai}{AI}{activation information}
\newacronym{amp}{AMP}{approximate message passing}
\newacronym{aoi}{AoI}{age-of-information}
\newacronym{awgn}{AWGN}{additive white Gaussian noise}
\newacronym{bs}{BS}{base station}
\newacronym{eccdf}{ECCDF}{empirical complementary cumulative distribution function}
\newacronym{cs}{CS}{compressed sensing}
\newacronym{cra}{CRA}{compressive \ac{ra}}
\newacronym{csi}{CSI}{channel state information}
\newacronym{dt}{DT}{data transmission}
\newacronym{fifo}{FIFO}{first in, first out}
\newacronym{fsa}{FSA}{framed slotted-ALOHA}
\newacronym{gf}{GF}{grant-free}
\newacronym{gfeo}{GFEO}{greedy frame efficiency optimizer}
\newacronym{gmt}{GMT}{greedy maximum throughput}
\newacronym{harq}{HARQ}{hybrid automatic repeat request}
\newacronym{hmm}{HMM}{hidden Markov model}
\newacronym{iiot}{IIoT}{industrial internet-of-things}
\newacronym{iot}{IoT}{internet-of-things}
\newacronym{isa}{ISA}{i.i.d. slot allocation}
\newacronym{kpi}{KPI}{key performance indicator}
\newacronym{lsfc}{LSFC}{large-scale fading coefficient}
\newacronym{lri}{LRI}{linear-reward-inaction}
\newacronym{m2m}{M2M}{machine-to-machine}
\newacronym{map}{MAP}{maximum a posteriori probability}
\newacronym{mdp}{MDP}{markov decision process}
\newacronym{mimo}{MIMO}{multiple input-multiple output}
\newacronym{minlp}{MINLP}{mixed integer non-linear programming}
\newacronym{ml}{ML}{maximum likelihood}
\newacronym{mmpc}{MMPC}{min-max pairwise correlation}
\newacronym{mmtc}{mMTC}{massive \ac{mtc}}
\newacronym{mra}{MRA}{massive random access}
\newacronym{mtc}{MTC}{machine-type communications}
\newacronym{mtd}{MTD}{machine-type device}
\newacronym{nr}{NR}{new radio}
\newacronym{noma}{NOMA}{non-orthogonal multiple access}
\newacronym{oma}{OMA}{orthogonal multiple access}
\newacronym{omp}{OMP}{orthogonal matching pursuit}
\newacronym{pam}{PAM}{pulse-amplitude modulation}
\newacronym{pb}{PB}{periodic beacon}
\newacronym{pdf}{PDF}{probability density function}
\newacronym{pia}{PIA}{partial information acquisition}
\newacronym{pima}{PIMA}{partial-information multiple access}
\newacronym{pomdp}{PO-MDP}{partially observable-Markov decision process}
\newacronym{ra}{RA}{random access}
\newacronym{rb}{RB}{reservation beacon}
\newacronym{rfid}{RFID}{radio-frequency identification}
\newacronym{rl}{RL}{reinforcement learning}
\newacronym{saloha}{SALOHA}{slotted ALOHA}
\newacronym{sb}{SB}{scheduling beacon}
\newacronym{snr}{SNR}{signal-to-noise ratio}
\newacronym{tdma}{TDMA}{time-division multiple-access}
\newacronym{ura}{URA}{unsourced random access}
\newacronym{usd}{usd}{unit symbol duration}
\newacronym{urllc}{URLLC}{ultra-reliable low-latency communications}
\def\BibTeX{{\rm B\kern-.05em{\sc i\kern-.025em b}\kern-.08em
    T\kern-.1667em\lower.7ex\hbox{E}\kern-.125emX}}
\begin{document}

\title{Semi-Grant-Free Orthogonal Multiple Access\\with Partial-Information\\for Short Packet Transmissions}
\author{Alberto Rech\IEEEauthorrefmark{1}\IEEEauthorrefmark{3}, Stefano Tomasin\IEEEauthorrefmark{1}\IEEEauthorrefmark{2}, Lorenzo Vangelista\IEEEauthorrefmark{1}, and Cristina Costa\IEEEauthorrefmark{4} \\ 
\IEEEauthorrefmark{1}Department of Information Engineering, University of Padova, Italy.\\
\IEEEauthorrefmark{2}Department of Mathematics, University of Padova, Italy.\\
\IEEEauthorrefmark{3}Smart Networks and Services, Fondazione Bruno Kessler, Trento, Italy.\\
\IEEEauthorrefmark{4}S2N National Lab, CNIT, Genoa, Italy.\\ 
\small
\texttt{alberto.rech.2@phd.unipd.it, stefano.tomasin@unipd.it,} \\
\texttt{lorenzo.vangelista@unipd.it, cristina.costa@cnit.it}
\thanks{Part of this paper has been presented at the International Conference on Ubiquitous and Future Networks (ICUFN) \cite{rech2023partial}.}}

\maketitle

\begin{abstract}
Next-generation \ac{iot} networks require extremely low latency, complexity, and collision probability.  
We introduce the novel \ac{pima} scheme, a semi-\ac{gf} coordinated \ac{ra} protocol for short packet transmission, with the aim of reducing the latency and packet loss of traditional multiple access schemes, as well as more recent preamble-based schemes. 
With \ac{pima}, the \ac{bs} acquires partial information on instantaneous traffic conditions in the \ac{pia} sub-frame, estimating the number of active devices, i.e., having packets waiting for transmission in their queue.
Based on this estimate, the \ac{bs} chooses both the total number of slots to be allocated in the \ac{dt} sub-frame and the respective user-to-slot assignment.
Although collisions may still occur due to multiple users assigned to the same slot, they are drastically reduced with respect to the \ac{saloha} scheme, while achieving lower latency than both \ac{tdma} and preamble-based protocols, due to the extremely reduced overhead of the \ac{pia} sub-frame.
Finally, we analyze and assess the performance of \ac{pima} under various activation statistics, proving the robustness of the proposed solution to the intensity of traffic, also with burst traffic.
\end{abstract}

\begin{picture}(0,0)(0,-480)
\put(0,0){
\put(0,0){\qquad \qquad \quad This paper has been submitted to IEEE for publication. Copyright may change without notice.}}
\end{picture}

\glsresetall
\begin{IEEEkeywords}
\Acf{6g}, \Acf{iot}, \Acf{mra}, \Ac{mtc}, Multiple access, Partial-information.
\end{IEEEkeywords}

\glsresetall

\section{Introduction}\label{sec:introduction}
\Ac{mtc} is a use case of \ac{b5g} cellular networks with increasing relevance, as the focus of mobile communications shifts from humans to machines. To support this scenario, several new technological solutions should be adopted, including specific multiple access schemes for both flavors of \ac{mtc}, i.e., \ac{mmtc} and \ac{urllc}. Both show several distinct features and challenges to meet the requirements of emerging \ac{iot} applications and services.  For example, \ac{urllc} use cases target a maximum latency of 1~ms and reliability of 99.99999\%  (e.g. mission-critical applications), while \ac{mmtc} scenarios require supporting devices with density up to 1~million devices per km$^2$ (e.g., for industrial \ac{iot}). 
The focus of this paper is the medium access control of uplink transmissions in an \ac{mtc} scenario, wherein users (or \acp{mtd}) transmit to a common \ac{bs}. 
Although access procedures based on resource requests and grants have been discussed for \ac{mtc} \cite{Centenaro17}, \ac{mmtc} \cite{Riolo21}, and \ac{urllc} \cite{Popovski19}, the sporadic nature of transmissions by a large number of users makes these schemes inefficient, and a \ac{gf} or semi-\ac{gf} solution is to be preferred. 
\Ac{gf} approaches address this issue since users can transmit data immediately, without waiting for approval from \ac{bs}. \ac{gf} approaches include several different techniques, which can be classified into uncoordinated and coordinated \ac{ra} \cite{ElTanab21}.
A survey of the literature related to these alternative schemes is provided in Section~\ref{litrev}. In any case, existing multiple access solutions are either based on the exact knowledge by the \ac{bs} of which users are going to transmit or on no knowledge of the users' state (i.e., if they have a packet to transmit or not).

In \cite{rech2023partial}, we introduced a novel approach to multiple access, in which \ac{bs} first acquire a {\em partial information} on the state of the user, and then schedules the transmissions. In particular, first, the \ac{bs} estimates the {\em number of users} with packets to transmit, without knowing their identities. Then, based only on this information, allocates users to slots. The allocation is performed in a non-exclusive manner, i.e., a slot is typically allocated to multiple users whose transmissions may collide. Indeed, without knowing the {\em identity} of users with packets to transmit ({\rm active users}), collisions could only be avoided using \ac{tdma}, which is however extremely inefficient for sporadic traffic. 

The resulting scheme is denoted as \ac{pima}. In \ac{pima}, time is organized in frames of {\em variable length}, each split into two sub-frames. The first is the \ac{pia} sub-frame, where all active users simultaneously send a signal to the \ac{bs}. The \ac{bs} performs a \ac{map} estimation of the number of active users. Based on this knowledge, \ac{bs} then assigns one slot to each user in the system for transmission in the \ac{dt} subframe.

However, the \ac{pia} sub-frame in the scheme of \cite{rech2023partial} introduced a significant overhead, as the users must transmit a long random sequence and the \ac{bs} estimated the total received power to count the active users. Moreover, an in-depth analysis of the scheme was missing, considering the differentiated scenarios of next-generation cellular networks. In this paper, we partially re-design \ac{pima} and provide an in-depth analysis of its performance in various scenarios. In particular, the main contributions of this paper are the following.
\begin{enumerate}
\item We focus on the short packet transmission scenario, with packets consisting of a few complex symbols each and we improve the user enumeration procedure. By assuming \ac{csi} is perfectly acquired at each user with a downlink beacon, users can precode their transmission so that they coherently combine at the \ac{bs}, similarly to a \ac{pam} signal. Therefore the \ac{bs} can reliably count the active users with a significantly shorter \ac{pia} sub-frame, reducing the overhead.
\item We consider three different packet generation statistics, including a bursty traffic generation in the case of \ac{mra}, wherein the number of users in the system is arbitrarily large while the number of active users remains finite;
\item We prove that with independent identically distributed (i.i.d.) activation, allocating each user to a single slot maximizes the resulting frame efficiency.
\item We prove that with \ac{mra} traffic, the maximum frame efficiency is obtained by allocating a number of slots equal to the number of active users.
\item We significantly extend the numerical evaluation part, comparing \ac{pima} also with state-of-the-art preamble-based approaches.
\end{enumerate}

The remainder of the paper is organized as follows. Section \ref{litrev} provides a review of the literature of multiple access schemes. In Section~\ref{sec:systemmodel} and Section~\ref{sec:protocol}, we first introduce the system model and the packet generation processes, then we describe the frame structure and the \ac{pima} protocol. Section~\ref{sec:numest} provides details of the user enumeration task for partial information acquisition. 
Then, the optimal scheduler for i.i.d. activations is derived in Section~\ref{sec:frameeff}.
In Section~\ref{sec:numericalresults} we discuss the numerical results and compare \ac{pima} with the state-of-the-art orthogonal schedulers. schedulers. Finally, Section~\ref{sec:conclusions} draws some conclusions.

\section{Literature Review}\label{litrev}
Multiple access schemes can broadly categorized into uncoordinated \ac{ra}, coordinated \ac{ra}, and fast uplink grants.

\vspace{5pt}\noindent\emph{Uncoordinated \ac{ra}.} In uncoordinated \ac{ra}, users transmit at random time instants, and specific techniques are adopted at the receiver to mitigate the effects of collisions. Such techniques include traditional \ac{oma}, which is outperformed by \ac{noma} schemes \cite{Saito13Non}, for which the symbols of each user are not allocated to orthogonal resources. An effort has been made towards the unification of the different \ac{noma} solutions and the performance gains with respect to their orthogonal counterparts have been highlighted \cite{Meng21Advanced, Chen18Toward}. However, \ac{noma} requires advanced pairing and power allocation techniques, as well as powerful channel coding and interference cancelation mechanisms that only partially mitigate collision effects \cite{Dai18}. Under these conditions, \ac{bs} may become prohibitively complex to serve a large number of users.
In recent years, unsourced  \ac{ra} has been proposed to manage a massive number of devices \cite{Polyanskiy17}: at any time, a fraction of devices transmit simultaneously using the same channel codebook. The receiver decodes the arriving messages without knowing the identities of the transmitters. Although this approach is very effective in managing many users, good performance can be achieved only for very small payloads and with high-complexity massive \ac{mimo} receivers \cite{fengler2021non, decurninge2020tensor, rech2023unsourced}. Moreover, most works assume perfect knowledge of the number of active users, and only recently an information theoretical analysis of \ac{ura} with random user activity has been proposed \cite{Ngo2023Unsourced}.

\vspace{5pt}\noindent\emph{Coordinated \ac{ra}.}  These solutions typically divide the time into slots, each with the duration of one packet. \Ac{saloha} is the simplest and most widely adopted coordinated \ac{ra} protocol: users transmit at the beginning of the first slot available after packet generation. When collisions occur,  users randomly delay the retransmissions of collided packets. A reduction of collisions is achieved by dividing time into frames; each of these is split into slots wherein users transmit at random, in what is known as \ac{fsa}, widely adopted in \ac{rfid} systems \cite{Lee05, Su16}. Nevertheless, coordinated \ac{ra} solutions are particularly useful when user activations are highly correlated, for example, as a result of correlated underlying traffic generation \cite{3GPP37868}.
Also, re-transmissions (with the consequent accumulation of packets in user queues) introduce a correlation of transmissions among users:  this further increases collisions, while it can also be exploited to indirectly coordinate \ac{ra}. In the literature, correlation-based schedulers have recently gained attention as a possible breakthrough for multiple access in \ac{mtc}. Such schemes typically rely on the knowledge of traffic generation statistics \cite{Kalor18}, or learn the traffic correlation by exploiting the capabilities of \acp{hmm} tracking successes and collisions in previous frames \cite{Moretto21a}, or reinforcement learning techniques \cite{Rech21, Destounis19}.
In another line of research, preamble-based random access techniques are more suitable in the \ac{iot} context. For example, with \textit{multichannel ALOHA}, active users choose a preamble from the common preamble pool, that is, a codebook of orthogonal preambles. Then, the \ac{bs} can effectively differentiate between multiple users attempting to access the network simultaneously and schedule the users for data transmission in a second dedicated sub-frame, which typically has a fixed length. However, such schemes require different preamble lengths based on the number of active users to perform optimally. Indeed, while long preambles support more active numbers, their entailed communication overhead grows rapidly.
Non-orthogonal preambles to reduce overhead have been studied in the context of \ac{cra} \cite{Wunder14Compressive, Seo19Low, Choi20On}; with \ac{cra} the \ac{bs} resorts to \ac{cs} to identify active users \cite{Donoho06Compressed}. However, multiple measurements are needed to accurately estimate the preambles, therefore requiring \ac{mimo} receivers or multiple transmission steps.

\vspace{5pt}\noindent\emph{Fast Uplink Grant.}  In fast uplink grant schemes \cite{Ali19} the \ac{bs} schedules one slot for each user, without any resource request, so access randomness is removed, while coordination remains. Slots are usually shared by multiple users, and collisions may still occur. The exploitation of traffic correlation has also been considered in the design of fast-uplink grant protocols, involving multiarmed bandits-based methods \cite{Ali18Sleeping} and machine learning tools for traffic prediction \cite{Shehab20Traffic}.

\vspace{5pt}\noindent\emph{\ac{pima} Protocol Classification.} The \ac{pima} protocol proposed in this paper falls into the category of the coordinated \ac{ra} schemes, and in particular, can be considered a semi-grant-free \ac{ra} solution. Indeed, with respect to other two-step \ac{ra}-access schemes consisting of preamble and data transmission stages, \ac{pima} acquires only partial information on the activation statistics in the \ac{pia} sub-frame, avoiding revealing the users' identities. Likewise, \ac{pima} cannot be considered a fast uplink grant approach, due to its partial information acquisition at the \ac{bs}. 

\vspace{5pt}\noindent\emph{Notation.} Scalars are indicated in italic letters; vectors and matrices are indicated in boldface lowercase and uppercase letters, respectively. Sets are denoted by calligraphic uppercase letters and $|\mathcal{A}|$ denotes the cardinality of the set $\mathcal{A}$. $\mathbb{P}(\cdot)$ denotes the probability operator, $\mathbb{E}[\cdot]$ denotes the statistical expectation, and $\log(\cdot)$ indicates the natural logarithm function. 

\section{System Model}\label{sec:systemmodel}
We consider the uplink of a multiple access scenario with $N$ single-antenna users transmitting to a common single-antenna \ac{bs}. We assume that the value of $N$ is known at the \ac{bs}.

\vspace{5pt}\noindent\emph{Time Organization.} Time is split into {\em frames}, each comprising an integer number of {\em slots} and an additional short time interval, whose purpose is described in the following. Each slot has a fixed duration of $T_{\rm s}$ complex symbols, while each frame comprises a different number of slots. Perfect time synchronization at the \ac{bs} is assumed, thus each user can transmit signals with specific times of arrival at the \ac{bs}.

Each user may transmit at most one packet per frame, each of the duration of one slot, and each user transmits at most one packet per frame. In the following, $t$ denotes the frame index. We consider a multiple access protocol, where the same slot is in general assigned to multiple users for transmission.

\vspace{5pt}\noindent\emph{Channel.}
Due to the scheduling of the same slot to multiple users, collisions between packets may occur. We assume that when two or more users transmit in the same slot, a collision occurs, preventing the decoding of all collided packets by the \ac{bs}.
Successful transmissions are acknowledged by the \ac{bs} at the beginning of the following frame, and upon collision, collided users retransmit their packets in the following frame. We assume that the \ac{bs} always correctly decodes the received packets in slots without collision, thus the channel does not introduce other sources of communication errors.

\subsection{Packet Generation and Buffering}
Packets generated in frame $t$ by user $n$ are stored in its buffer and transmitted according to a \ac{fifo} policy.
In the following, we consider both the case of finite and infinite buffer capacity. In the case of finite-length queues, to ensure data freshness, whenever a new packet is generated while the buffer is at full capacity, the oldest packet is dropped.
Let $B_n$ be the queue length of the user $n$ queue at the beginning of frame $t$.
If $B_n(t) > 0$, the buffer of user $n$ is non-empty, and $n$ is said to be \textit{active}, instead, if $B_n(t)= 0$, its queue is empty and user $n$ is considered \textit{inactive}. The total number of active users at the beginning of frame $t$ is $\nu(t)$.

\subsection{Activations Statistics}\label{sec:actstat}

We analyze the performance of \ac{pima} in three different scenarios, depending on the users' activation statistics.

\vspace{5pt}\noindent\emph{i.i.d. Activations.}
The activation statistics of the users are i.i.d. when a) users have a queue for one packet only, and b) all colliding packets are dropped at the receiver after the first transmission. Such a scenario is typical of monitoring systems requiring frequent updates and data freshness \cite{munari22} and will be discussed in detail in Section~\ref{sec:iidactivations}.

\vspace{5pt}\noindent\emph{Correlated Activations.}
In this scenario, retransmissions of previously collided packets are allowed. 
Due to the presence of queues and retransmissions, users' activations are generally correlated, since users colliding at frame $t$ will deterministically retransmit in the following frames. In this case, we will assume to have queues of infinite length, and the stability condition of all queues is assumed.

For both the i.i.d. and the correlated activations cases, we assume that at each user the traffic generation, also denoted as {\em packet arrival process}, follows a Poisson distribution with parameter $\lambda$. The well-known properties of Poisson processes provide a total normalized arrival rate of $\Lambda_{T} = N\lambda/T_{\rm s}$, and when $\Lambda_{T} = 1$, on average one packet is generated during the duration $T_{\rm s}$ of a slot.

\vspace{5pt}\noindent\emph{Bursty Activations.}
In this scenario, we assume a bursty traffic model, in which a finite subset of users activates at the same time, with each user generating a single packet to transmit. Retransmissions of colliding packets are also allowed in this case, while the buffer capacity is unitary for all users.    
In particular, we assume that at each burst the number of generations of packets follows a Poisson distribution with parameter $\Lambda_{\rm B}$, and we denote by $\tau_{\rm B}$ the random variable of the interarrival time between consecutive bursts. 

\section{Partial Information-Multiple Access }\label{sec:protocol}
\begin{figure*}
    \centering
	\includegraphics[width = \textwidth]{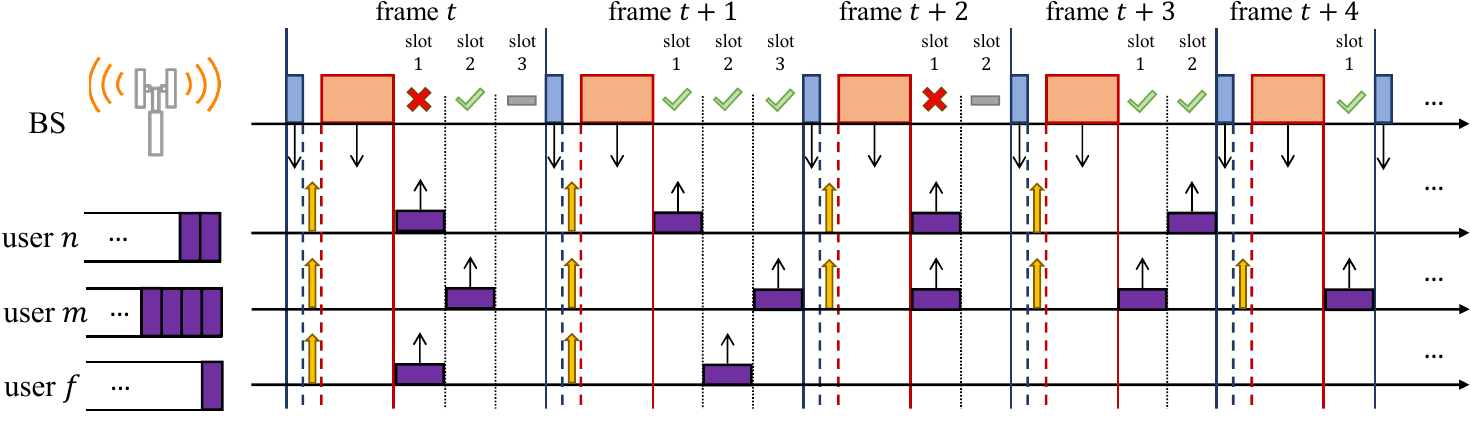}
	\caption{Example of the \ac{pima} protocol and its frame structure. In this example, the user $n$, $m$, and $f$ are active at the beginning of frame $t$, each with a different number of packets to transmit (purple rectangles). The \ac{rb} and the \ac{sb} transmitted by the \ac{bs} in the downlink are, respectively, represented by the blue and orange rectangles, while the yellow arrow represents the \ac{ai} signal for the user enumeration. For drawing simplicity, in this example no packets are generated after the beginning of frame $t$.}
	\label{fig:pimastruct}
\end{figure*}
In this section, we provide a detailed description of the proposed \ac{pima} protocol.
Each frame is divided into two {\em sub-frames}, namely the {\em \ac{pia} sub-frame} and the {\em \ac{dt} sub-frame}. 
The \ac{pia} sub-frame is used to estimate (at the \ac{bs}) the number of currently active users. Based on this information, the \ac{bs} decides the duration (in slots) of the \ac{dt} sub-frame and assigns each user to one slot, for uplink data transmission.

\subsection{Partial Information Acquisition Sub-frame}

The beginning of frame $t$, and thus of the \ac{pia} sub-frame, is triggered by the \ac{rb}, which is transmitted in broadcast by the \ac{bs} to all users. \acp{rb} mark the start of the \ac{pia} sub-frame and contain the acknowledgments of correctly received packets in the previous frame. Moreover, \acp{rb} provides to each user an accurate \ac{csi} of its channel to the \ac{bs}, denoted as $g_n$. In the \ac{pia} sub-frame the \ac{bs} obtains the estimate  $\hat{\nu}(t)$ of the number of active users $\nu(t)$, as described in detail in Section~\ref{sec:numest}.

At the end of the \ac{pia} sub-frame, the \ac{bs}, knowing $\hat{\nu}(t)$, schedules the transmissions for the next sub-frame.
Let $\bm{q}(t) = [q_1(t), \ldots, q_N(t)]$ be the \textit{slot selection vector}, collecting the slot indices assigned to each user; then, the length of the \ac{dt} sub-frame $L_2(t)$ can be derived from $\bm{q}(t)$ as
\begin{equation}
L_2(t) = \max_{\substack{n}} q_n(t).
\end{equation}
To end the \ac{pia} sub-frame and trigger the beginning of the following \ac{dt} sub-frame, the \ac{bs} transmits the \ac{sb}, which contains the slot selection vector $\bm{q}(t)$, encoded as described in Section \ref{overhead}. 

Note that, to maintain synchronization, inactive users could a) wake up and wait for the next downlink \ac{rb} when generating a packet, or b) always wake up when \acp{rb} and \acp{sb} are transmitted (this can be achieved by collecting timing information in the beacons).

\subsection{Data Transmission Sub-frame} 
In the \ac{dt} sub-frame, users transmit their packets, according to the scheduling set by the \ac{bs} in the \acp{sb}. 
Note that packets generated by user $n$ during the \ac{dt} sub-frame are delayed and transmitted in the following frame, to reduce collisions, since the \ac{dt} frame length is derived only based on the number of users active in the \ac{pia} sub-frame.
Indeed, data transmissions of users who do not request resources in the \ac{pia} subframe would introduce uncertainty in the optimization of $\bm{q}_n(t)$, limiting the ability of \ac{pima} to adapt to instantaneous traffic conditions. 

Since all the derivations in the following are related to each frame separately, to simplify notation, we drop the frame index $t$ from all the variables.

An example of the frame structure and packet transmissions of \ac{pima} is reported in Fig.~\ref{fig:pimastruct}.

\subsection{PIMA Beacons Overhead}\label{overhead}

We consider two options for the coding of $\bm{q}$. The first option provides that all the users in the system receive the explicit indication of the allocated slot in the \ac{dt} sub-frame. Consider a codebook of $N$ codewords (each of length $\log_2N$ bits)  representing all possible sorting of user indices, where if a user is in position $i_n$, its assigned slot is $k = i_n \mod L_2 + 1$. With this codebook, the \ac{sb} introduces an overhead of $R_{\rm SB} = (N+1)\log_2N$ bits, since an additional codeword is needed to indicate the length of the \ac{dt} sub-frame, $L_2$.
Note that for a large number of users, this encoding strategy can significantly increase the length of \ac{sb}, deteriorating the performance of \ac{pima}. 

Therefore, we consider a second strategy, in which all users know a list of random sequences (each of $N\log_2 N$ bits) of length $J$, indicating the order in which the users will be served. 
In this case, the \ac{bs} transmit in the \ac{sb} only the index corresponding to the scheduling sequence and the codeword to indicate $L_2$, providing an overhead of $\log_2J + \log_2 N$ bits. 
In the following, we adopt this second strategy to reduce the \ac{sb} overhead.

\section{Estimation of the Number of Active Users}\label{sec:numest}

To obtain an estimate of the number of active users at the \ac{bs}, each active user transmits an \ac{ai} signal immediately after receiving the \ac{rb}.
Note that we neglect here the propagation time between \ac{bs} and the user, which can be easily accommodated by considering a transition (silent) time between the \ac{rb} and \ac{ai} transmissions.  
The set of users transmitting the \ac{ai} signals during the \ac{pia} sub-frame is 
\begin{equation}
\mathcal{N}_{\rm a} = \{k : B_n>0\},
\end{equation}
with $|\mathcal{N}_{\rm a}| = \nu$. We stress that the \ac{bs} does not know the identity of the active users, since the \ac{ai} signals do not contain such information, to make them shorter.
In particular, we assume that each user transmits a single complex symbol $\gamma_{n}$ in the \ac{pia} sub-frame.  Assuming perfect synchronization, the received signal, at the \ac{bs}, is the superposition of all the symbols transmitted by the users, i.e.,
\begin{equation}
    y = \sum_{k \in \mathcal{N}_{\rm a}} h_n\gamma_n + w,
\end{equation}
where $h_n$ is the channel coefficient between user $n$ and the \ac{bs} and $w$ is the \ac{awgn} term with zero mean and variance $\sigma_w^2$. Assuming that perfect \ac{csi} is obtained by users through the \ac{rb} downlink transmission, each user $n$ perfectly inverts the channel, setting $\gamma_{n} = 1/h_n$, therefor the \ac{bs} receives 
\begin{equation}
y = \nu + w.
\end{equation}

The \ac{map} estimate of the number of active users is then
\begin{equation}\label{hnu}
\hat{\nu} = \argmax_{b}\; p_{y|\nu}(y|b)\;p_{\nu}(b),
\end{equation}
where, for the \ac{awgn} channel, the \ac{pdf} of the received signal conditioned to the number of active users is 
\begin{equation}\label{gauss}
p_{y|\nu}(y|b) = \frac{1}{\sqrt{2\pi\sigma_w^2}} e^{-\frac{|y-b|^2}{\sigma^2_w}}.
\end{equation}
This criterion establishes decision regions on the received signal. Define $\delta_b$ as the distance from $b$ (the value obtained without noise) and the region associated with the decision $b+1$. Then, when $y$ falls in the region $[b-\delta_b, b+1-\delta_{b+1}]$, the decision on the number of active devices is $b$. Since the distance between $b$ and $b+1$ is 1, we have that the distance from $b$ (the value obtained without noise) and the region associated with the decision $b+1$ is $1-\delta_{b+1}$. According to the \ac{map} criterion \eqref{hnu}, the optimal regions satisfy the following equation
\begin{equation}\label{thresholds}
p_{b|\nu}(b+\delta_b|b)p_{\nu}(b) = p_{b+1|\nu}(b+1-(1-\delta_b)|b+1)p_{\nu}(b+1),
\end{equation}
since $\delta_{b+1} = 1-\delta_b$. Replacing \eqref{gauss} into \eqref{thresholds}, after some algebraic steps we have
\begin{equation}
\delta_b =  \frac{1}{2} + \frac{\sigma_w^2}{2}\ln \frac{p_{\nu}(b)}{p_{\nu}(b+1)}.
\end{equation}
The error probability is the probability of falling out of the correct decision region, thus, conditioned on the fact that $\nu=b$ users are active, we have 
\begin{equation}
P_e(b) = {\rm Q}\left(\frac{\delta_b}{\sigma_w/\sqrt{2}}\right) + {\rm Q}\left( \frac{1-\delta_{b+1}}{\sigma_w/\sqrt{2}}\right),
\end{equation}
where  ${\rm Q}(\cdot)$ is the tail distribution function of the standard normal distribution. 

\vspace{5pt}\noindent\emph{I.i.d. Activations.} In the case of i.i.d. activations, the activation process coincides with the packet generation process. Assuming that each user generates packets according to a temporal Poisson process with parameter $\lambda_n = \lambda, \forall n$, we obtain
\begin{equation}\label{pN_iid}
    p_{\nu}(b) = \binom{N}{b} b(1-e^{-\lambda T_{\rm a}}) (N-b)e^{-\lambda T_{\rm a}},
\end{equation}
where $T_{\rm a}$ is the time interval considered for the packet generations and $\binom{N}{b}$ counts for all the possibilities of having exactly $b$ active users. Note that for $N\rightarrow \infty$ we have $p_{\nu}(b) \approx p_{\nu}(b+1)$, thus $\delta_b \rightarrow \frac{1}{2}$ and all regions have the same size. In this asymptotic scenario, we also have 
\begin{equation}
P_e(b) \rightarrow 2{\rm Q}\left(\frac{1}{\sqrt{2}\sigma_w}\right).
\end{equation}

\section{Frame Efficiency-based Scheduling}\label{sec:frameeff}

In this section, we propose a time-resource scheduling conditioned on the number of active users estimated in the \ac{pia} sub-frame.

First, we introduce a performance metric that takes into account both the packet latency and the collision probability, whose optimization aims at finding the right balance between the two.
Let $l \in \{1,\ldots, L_2\}$ be the slot index within the frame (in the \ac{dt} sub-frame). We define the success indicator function in slot $l$ as $c_l = 1$ if a successful transmission occurs in slot $l$ and $c_l = 0$ otherwise.
Note that the latter case considers both the collision and non-transmission cases.
Then, the \textit{conditional frame efficiency} is defined as the ratio between the number of successes in the frame and the length of the \ac{dt} sub-frame, i.e., 
\begin{equation}\label{frameeff}
    \eta = \frac{1}{L_2}\sum_{l=1}^{L_2}\mathbb{E}[c_l|\nu].
\end{equation}
The adaptive maximization of this metric provides the proper balance between the \ac{dt} sub-frame length and the successful transmission probability.

At each frame, immediately after the end of the \ac{pia} sub-frame, the \ac{bs} solves the following optimization problem:
\begin{subequations}\label{feopt}
	\label{maxprob}
	\begin{equation}
\max_{\substack{\bm{q}}}\eta,
	\end{equation}
	\begin{equation}
{\rm     s.t.\;}		\;\; q_n \in \{1,\ldots, L_2\}.
	\end{equation}
\end{subequations}

The optimization problem \eqref{feopt} is one of \ac{minlp}, and its solution quickly becomes infeasible with long queues or many users. For these reasons, in the following, we focus on the analysis of the i.i.d. activations scenario, designing the parameters of the \ac{pima} scheduler based on its basic assumptions. 
Note that, while the i.i.d. activations scenario could substantially differ from the correlated activations one at high traffic, it still represents a good approximation in low traffic conditions, wherein retransmissions due to collisions occur sporadically. Moreover, the analysis of i.i.d. activations is useful when there is no information available on the transmission correlation at the \ac{bs} or when obtaining such information is too expensive.

\section{Scheduling With I.I.D. Activations}\label{sec:iidactivations}

First, we observe that, since activations of users are i.i.d., we only have to determine how many users are assigned to each slot, as any specific assignment satisfying this constraint will yield the same collision probabilities, thus the same expected frame efficiency. 
Note that in case of decoding failure of multiple packets, the user scheduling should be randomized to avoid the repetition of the same collisions.

To minimize the number of users assigned to the same slot, given a length $L_2$, we assign to slot $l$ the following number of users
\begin{equation}\label{udef}
	u_{l} = 
	\begin{dcases}
		\left\lceil\frac{N}{L_2}\right\rceil \quad {\rm if} \quad l \leq N\bmod{L_2},\\
		\left\lfloor\frac{N}{L_2}\right\rfloor \quad {\rm if} \quad l > N\bmod{L_2},
	\end{dcases}
\end{equation}
where we possibly schedule one more user in the first $\left\lfloor\frac{N}{L_2}\right\rfloor$ slots to minimize the transmission delay.

With this scheduling policy, the slot success random variable $c_l$ can be rewritten as a function of $u_l$, as it only depends on the number of users scheduled in slot $l$. Thus, the optimization problem \eqref{feopt} is reduced to the optimization of the \ac{dt} sub-frame length, $L_2$, and from \eqref{frameeff}, we have
\begin{subequations}
	\label{iidproblem}
	\begin{equation}
		L_2^* = \argmax_{\substack{L_2}} \frac{1}{L_2}\sum_{l=1}^{L_2}\mathbb{E}[c_{l}|\nu, u_l],
	\end{equation}
	\begin{equation}
		\mbox{s.t. } L_2 \in \mathbb{N}/\{0\}.\label{constraintISA}
	\end{equation}
\end{subequations}

Now, given $\nu$, the probability that user $n$ is the one and only active user assigned to slot $l$ is derived by considering all cases of active users, where user $n$ is active and all other users assigned to slot $l$ are, instead, inactive. 
Consider the matrix $\left\lceil N/L_2\right\rceil \times L_2$, where column $l$ collects the user indexes assigned to slot $l$. For slot $l$, assuming that an active user is assigned to slot $l$, the number of favorable cases is given by all the possibilities to place $\nu-1$ users (the remaining active users) in all columns of the matrix indexed by $l' \neq l$. Excluding column $l$, the available positions (row and column index couples) are $N-u_l$. Therefore, the favorable case is given by all the combinations of $\nu-1$ active users taken from $N-u_l$ users. 
Instead, the total number of cases is given by all the combinations of $\nu$ active users chosen from the $N$ scheduled users.
The probability of having a successful transmission in slot $l$ is therefore
\begin{equation}\label{Ec_l}
\mathbb{E}[c_{l}|\nu, u_l] = u_l\frac{\binom{N-u_l}{\nu-1}}{\binom{N}{\nu}},
\end{equation}
where factor $u_l$ counts the users assigned to slot $l$. Note that, in \eqref{Ec_l}, the numerator counts the number of combinations giving exactly one active user assigned to slot $l$, while the denominator counts the total number of possible combinations of active users. The probability of collision in slot $l$ is therefore $1 - \mathbb{E}[c_{l}|\nu, u_l]$. 

The problem \eqref{iidproblem} is a \ac{minlp} problem, and its solution strictly depends on the number of users in the system. If the number of active users is comparable with $N$, \eqref{iidproblem} is not solvable by continuous relaxation of $L_2$, as the rounding functions are not differentiable. However, it is possible to find the optimal frame length $L_2^{*}$ with complexity $O(\log N)$, using a binary search algorithm, or alternatively using a discrete gradient ascent algorithm. In any case,   $L_2^*$ depends only on $\nu$, thus can be computed offline and then stored in a table.
Instead, if $N\rightarrow \infty$ and $\nu << N$ is finite, the following result holds:

\begin{theorem}\label{theorem1}
For $N\rightarrow\infty$, under i.i.d. activations, given a finite number of active users $\nu$, the \ac{dt} sub-frame length $L_2$ maximizing frame efficiency is exactly $\nu$.
\end{theorem}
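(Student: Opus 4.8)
The plan is to analyze the conditional frame efficiency $\eta(L_2) = \frac{1}{L_2}\sum_{l=1}^{L_2}\mathbb{E}[c_l|\nu,u_l]$ in the regime $N\to\infty$ with $\nu$ fixed and finite, and show it is uniquely maximized at $L_2 = \nu$. First I would take the limit of \eqref{Ec_l} as $N\to\infty$. With $L_2$ fixed relative to $\nu$ (hence fixed), $u_l = \lceil N/L_2\rceil$ or $\lfloor N/L_2\rfloor$ grows like $N/L_2$. Writing $\binom{N-u_l}{\nu-1}/\binom{N}{\nu} = \frac{\nu}{N}\cdot\frac{(N-u_l)(N-u_l-1)\cdots(N-u_l-\nu+2)}{(N-1)(N-2)\cdots(N-\nu+1)}$ and noting each of the $\nu-1$ ratios in the product tends to $(1 - 1/L_2)$, we get $\mathbb{E}[c_l|\nu,u_l]\to u_l\cdot\frac{\nu}{N}\left(1-\frac{1}{L_2}\right)^{\nu-1}$, and since $u_l/N \to 1/L_2$, this limit is the same for every slot $l$: $\mathbb{E}[c_l|\nu]\to \frac{\nu}{L_2}\left(1-\frac{1}{L_2}\right)^{\nu-1}$. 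Consequently the asymptotic frame efficiency is $\eta_\infty(L_2) = \frac{\nu}{L_2^2}\left(1-\frac{1}{L_2}\right)^{\nu-1}$ — the slot index drops out of the average entirely.

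Next I would maximize $\eta_\infty(L_2)$ over positive integers $L_2$. It is cleanest to substitute $x = 1/L_2 \in (0,1]$, giving $\eta_\infty = \nu x^2 (1-x)^{\nu-1}$. Treating this as a function of a continuous variable $x$, differentiate: $\frac{d}{dx}\left[x^2(1-x)^{\nu-1}\right] = x(1-x)^{\nu-2}\left[2(1-x) - (\nu-1)x\right] = x(1-x)^{\nu-2}\left[2 - (\nu+1)x\right]$, which vanishes at $x^* = 2/(\nu+1)$ and is positive for $x < x^*$, negative for $x>x^*$; so the continuous optimum is at $x^* = 2/(\nu+1)$, i.e. $L_2 = (\nu+1)/2$. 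Since this is typically not an integer, I would then compare the two nearest integer values of $L_2$ — but more directly, I claim $\eta_\infty(L_2)$ as a function of the integer $L_2$ is maximized at $L_2 = \nu$ once $\nu$ is at least $2$ (the case $\nu=1$ giving $\eta_\infty = 1/L_2^2$, trivially maximized at $L_2=1=\nu$, and $\nu=0$ being degenerate). To see this I would examine the ratio $\eta_\infty(\nu)/\eta_\infty(\nu\pm1)$ directly: $\frac{\eta_\infty(\nu)}{\eta_\infty(\nu+1)} = \frac{(\nu+1)^2}{\nu^2}\cdot\frac{(1-1/\nu)^{\nu-1}}{(1-1/(\nu+1))^{\nu-1}}$, and similarly for $\eta_\infty(\nu)/\eta_\infty(\nu-1)$, and show both exceed $1$ for all integers $\nu\ge 2$ — this reduces to elementary inequalities about $\left(\frac{\nu^2-1}{\nu^2}\right)^{\nu-1}$ versus $\left(\frac{\nu}{\nu+1}\right)^2$ and the analogous expression on the other side, both of which follow from standard bounds on $(1+1/k)^k$ sandwiched between $2$ and $e$.

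The main obstacle I anticipate is the last step: the continuous maximizer $(\nu+1)/2$ is far from $\nu$, so one cannot simply invoke "the integer optimum is adjacent to the continuous optimum." The function $\eta_\infty(L_2) = \nu L_2^{-2}(1-1/L_2)^{\nu-1}$ is not unimodal in the naive sense one might hope, and the claim that the integer argmax is exactly $\nu$ (not $\nu-1$, not $\nu+1$, not something near $(\nu+1)/2$) requires care — in fact I should double-check whether the paper means the continuous surrogate or genuinely the discrete problem, because the continuous optimum near $(\nu+1)/2$ looks inconsistent with the stated answer $L_2=\nu$. The resolution is presumably that $\eta_\infty$ as written is not the right asymptotic object — more likely the relevant quantity is the total expected number of successes $L_2\cdot\eta_\infty = \nu(1-1/L_2)^{\nu-1}$ divided by a frame length that also includes the PIA overhead, or the efficiency should be normalized differently so that the objective becomes $(1-1/L_2)^{\nu-1}$-type weighted by something that pushes the optimum out to $\nu$; alternatively the classical slotted-ALOHA heuristic "one slot per active user" is being recovered, for which the relevant statement is that $\Pr[\text{slot empty of collisions}]$-based throughput $L_2\cdot\frac{\nu}{L_2}(1-1/L_2)^{\nu-1}$ has its per-success-rate optimum at $L_2=\nu$. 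So before writing the proof I would pin down precisely which ratio "$\eta$" denotes in this asymptotic limit, recompute the limit of \eqref{Ec_l} accordingly, and then the maximization becomes the standard one: maximize $\nu(1-1/L_2)^{\nu-1}$ — wait, that is increasing in $L_2$; so it must be the throughput-per-slot $\frac{\nu}{L_2}(1-1/L_2)^{\nu-1}$, whose derivative in $L_2$ gives optimum at $L_2=\nu$ exactly (since $\frac{d}{dL_2}\log\left[L_2^{-1}(1-L_2^{-1})^{\nu-1}\right] = -L_2^{-1} + (\nu-1)\frac{L_2^{-2}}{1-L_2^{-1}} = \frac{-1}{L_2} + \frac{\nu-1}{L_2(L_2-1)} = \frac{-(L_2-1)+(\nu-1)}{L_2(L_2-1)} = \frac{\nu - L_2}{L_2(L_2-1)}$, which is positive for $L_2<\nu$ and negative for $L_2>\nu$). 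That clean computation is the real content, and I would present it as the core of the proof, with the limit of the binomial ratio as the supporting lemma.
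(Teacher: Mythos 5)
Your final computation is the right one and your route to the limit is sound, but the middle of your proposal contains an arithmetic slip that you never correctly diagnose. You correctly show that each per-slot term satisfies $\mathbb{E}[c_l\mid\nu,u_l]\to\frac{\nu}{L_2}\bigl(1-\frac{1}{L_2}\bigr)^{\nu-1}$ via the telescoping product of $\nu-1$ ratios each tending to $1-1/L_2$ (a nice, more elementary alternative to the paper's Stirling-approximation argument). But then, in forming $\eta=\frac{1}{L_2}\sum_{l=1}^{L_2}\mathbb{E}[c_l\mid\nu,u_l]$, you divide the common value by $L_2$ without multiplying by the $L_2$ identical summands: the average of $L_2$ equal terms is the term itself, so $\eta_\infty(L_2)=\frac{\nu}{L_2}\bigl(1-\frac{1}{L_2}\bigr)^{\nu-1}$, not $\frac{\nu}{L_2^2}\bigl(1-\frac{1}{L_2}\bigr)^{\nu-1}$. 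This single error produces the spurious continuous optimum at $L_2=(\nu+1)/2$ and the ensuing paragraph of doubt about what $\eta$ ``must'' mean; the definition in \eqref{frameeff} is unambiguous and there is no normalization issue to resolve.

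Once that is fixed, your closing computation is exactly the needed one and is, if anything, cleaner than the paper's: the log-derivative $\frac{d}{dL_2}\log\bigl[L_2^{-1}(1-L_2^{-1})^{\nu-1}\bigr]=\frac{\nu-L_2}{L_2(L_2-1)}$ shows the objective is increasing for $1<L_2<\nu$ and decreasing for $L_2>\nu$, which gives unimodality and hence the integer maximizer $L_2=\nu$ directly (the paper only identifies the stationary points $L_2=1$ and $L_2=\nu$ and asserts which is the maximum). Strip out the $\frac{\nu}{L_2^2}$ detour, state the averaging step correctly, and keep the binomial-ratio limit as your supporting lemma and the log-derivative as the maximization, and you have a complete proof along essentially the paper's lines.
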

\begin{proof}
Since $N\rightarrow\infty$ and $1<\nu<<N$, $u_l = \frac{N}{L_2}$ for all $l = 1, \ldots, L_2$. From \eqref{Ec_l}  we have
\begin{equation}
\begin{split}
    \eta &= \mathbb{E}\left[c_{l}\bigg|\nu, \frac{N}{L_2}\right] = \frac{N}{L_2}\binom{N-\frac{N}{L_2}}{\nu-1}\bigg/\binom{N}{\nu} \\&= \frac{\nu}{L_2}\frac{(N-\frac{N}{L_2})!(N-\nu)!}{(N-\frac{N}{L_2}-\nu -1)!(N-1)!},
\end{split}
\end{equation}
then, using the Stirling factorial approximation $\alpha! = \sqrt{2\pi \alpha}(\frac{\alpha}{e})^\alpha$, whose validity is verified with good accuracy even for small values of $\alpha$, with some algebraic steps we obtain
\begin{equation}
\begin{split}
    \eta = &\frac{\nu}{L_2}\overbrace{\sqrt{\frac{N(1-\frac{1}{L_2})(N-\nu)}{[N(1-\frac{1}{L_2}) - \nu +1](N-1)}}}^{\mu(N, \nu, L_2)}  \times \\ \times & \underbrace{\frac{[N(1-\frac{1}{L_2})]^{N(1-\frac{1}{L_2})}(N-\nu)^{N-\nu}}{[N(1-\frac{1}{L_2}) - \nu +1]^{N(1-\frac{1}{L_2}) - \nu +1}(N-1)^{N-1}}}_ {\xi(N, \nu, L_2)}.
\end{split}
\end{equation}
Taking the limit for $N\rightarrow\infty$ we have
\begin{subequations}
\begin{alignat}{2}
        \lim_{N\rightarrow \infty} \mu(N, \nu, L_2) &= 1,\\
        \lim_{N\rightarrow \infty} \xi(N, \nu, L_2) &= \left(1-\frac{1}{L_2}\right)^{\nu -1}.
\end{alignat}
\end{subequations}
Therefore the maximum frame efficiency only depends on $\nu$ and $L_2$, and it is given by
\begin{equation}
    \eta = \frac{\nu}{L_2}\left(1-\frac{1}{L_2}\right)^{\nu -1}.
\end{equation}
Its stationary points are derived from the first order derivative with respect to $L_2$ as
\begin{equation}
\frac{d}{dL_2}\left[\frac{\nu}{L_2}\left(1-\frac{1}{L_2}\right)^{\nu - 1}\right] = 0 \Leftrightarrow L_2 = \nu \vee L_2 = 1,
\end{equation}
thus, while $L_2=1$ is trivially the global minimum for $\nu>1$, the frame efficiency is maximized for $L_2 = \nu$. 
\end{proof} 

\subsection{On the Optimality of the Single Slot Allocation}

Throughout the paper, we have assumed that each user $n$ is assigned to a single slot $q_n$ in the frame. However, we may wonder if this scheduling policy is optimal or if it is preferable to assign multiple slots to each user. Focusing on the case of i.i.d. activations, we have the following result.

\begin{theorem}\label{theorem2}
Under i.i.d. activations, the assignment of a single slot to each user in the \ac{dt} sub-frame is optimal, i.e., it maximizes the expected frame efficiency. 
\end{theorem}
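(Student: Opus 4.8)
The plan is to compare the expected frame efficiency of the single-slot-per-user policy against any alternative policy that assigns (possibly) multiple slots per user, and to show the single-slot policy dominates. First I would fix the number of active users $\nu$ and a target \ac{dt} sub-frame length $L_2$, and recall from the argument preceding \eqref{udef} that under i.i.d.\ activations only the multiset of per-slot loads matters, not the identity of the assignment; hence a policy is fully described by how many users (counted with multiplicity) are assigned to each slot. If a user is allowed to occupy several slots, then the event ``user $n$ succeeds'' becomes ``there exists a slot in which $n$ is the unique active occupant,'' which couples the slots in a way the single-slot analysis avoids. The cleanest route is to argue at the level of a single success indicator: for a fixed total number of (user,slot) incidences $M = \sum_l u_l$ spread over $L_2$ slots, the sum $\sum_{l=1}^{L_2}\mathbb{E}[c_l\mid\nu]$ is maximized, and the resulting per-slot success probabilities are ``used most efficiently,'' exactly when each active user contributes to the count of at most one slot — i.e.\ when the incidences correspond to distinct users.

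The key steps, in order, would be: (i) Express the expected number of successful slots under an arbitrary allocation. With single-slot allocation this is $\sum_l u_l \binom{N-u_l}{\nu-1}/\binom{N}{\nu}$ as in \eqref{Ec_l}; with multi-slot allocation one must instead count, for each slot $l$, the combinations of $\nu$ active users that put exactly one active user in slot $l$ — but now ``exactly one active user in slot $l$'' is an event about the users occupying slot $l$, and a user occupying two slots can never succeed in both, so the union bound is loose. (ii) Show that replacing a multi-slot user by fresh distinct users (which the $N\to\infty$ regime always permits, since $\nu\ll N$) can only increase each term $\mathbb{E}[c_l\mid\nu,u_l]$: adding genuinely new users to the system as potential occupants dilutes the activation probability per slot-occupant in the right direction, while splitting one user's slot occupancy across independent identities removes the negative correlation that prevented simultaneous successes. (iii) Conclude that for every fixed $L_2$ the single-slot policy achieves at least the frame efficiency of any multi-slot policy, and then invoke Theorem~\ref{theorem1} to pin the optimal $L_2 = \nu$ and close.

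Concretely, I expect the heart of the argument to be a coupling or exchange inequality: take any allocation in which some user $n$ is assigned to two slots $l_1,l_2$; modify it by keeping $n$ in $l_1$ and introducing a brand-new phantom user $n'$ in $l_2$ (legitimate as $N\to\infty$). Because activations are i.i.d., $n'$ is active independently with the same marginal, and the probability that slot $l_2$ has a unique active occupant weakly increases while the probability for slot $l_1$ is unchanged and no other slot is affected; iterating removes all multiplicities. One must be a little careful that introducing phantom users also changes $\binom{N}{\nu}$ and the activation count bookkeeping, so the comparison is best done in the $N\to\infty$ limit where, as in the proof of Theorem~\ref{theorem1}, $\mathbb{E}[c_l\mid\nu,u_l]\to(u_l/L_2)\,$-type expressions become clean and the monotonicity in ``number of distinct occupants'' is transparent.

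The main obstacle is step (ii): making rigorous the claim that spreading one user over $k$ slots is strictly worse than using $k$ distinct users, because the naive per-slot success probabilities look formally identical — the loss is entirely in the joint event ``at least one of my slots is a clean success,'' and one has to exhibit the correlation penalty precisely. I would handle this by writing the per-user success probability $\mathbb{P}(\text{user }n\text{ succeeds})$ explicitly for a user occupying a set $S$ of slots, via inclusion–exclusion over ``slot $l\in S$ is $n$'s private clean slot,'' and showing this is bounded above by $|S|$ times the single-slot value with equality essentially never holding once $|S|\ge 2$ and $\nu\ge 2$; summing over users and slots then yields the frame-efficiency comparison. An alternative, possibly slicker, route is a direct concavity/Schur-convexity argument on the function $u\mapsto u\binom{N-u}{\nu-1}$, reducing ``multi-slot'' to ``more unequal per-slot loads'' and invoking majorization — but verifying the required convexity of that binomial expression in $u$ is itself the kind of routine-but-delicate computation I would defer.
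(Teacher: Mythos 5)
Your proposal never actually proves the inequality it needs: the ``main obstacle'' you identify in step (ii) is exactly the content of the theorem, and you explicitly defer both of the routes (inclusion--exclusion over a user's slots, or a Schur-convexity argument on $u\mapsto u\binom{N-u}{\nu-1}$) that would close it. Moreover, the obstacle itself is largely a red herring of your own making: the frame efficiency \eqref{frameeff} counts successful \emph{slots}, not successful \emph{users}, so the joint event ``at least one of my slots is a clean success'' and the negative correlation between a multi-slot user's several transmission opportunities never enter the objective. The only fact that matters --- and which you state correctly in step (i) --- is that $\mathbb{E}[c_l\mid\nu]$ depends solely on the number $u_l$ of distinct users assigned to slot $l$, via \eqref{Ec_l}, whether or not those users also appear elsewhere. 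From there the argument is one line: every slot satisfies $\mathbb{E}[c_l\mid\nu,u_l]\le\max_{u} u\binom{N-u}{\nu-1}/\binom{N}{\nu}$, so $\eta$, being the average of these terms, is bounded by that maximum for \emph{any} assignment, and the single-slot allocation with the $L_2$ of Theorem~\ref{theorem1} attains it. No phantom-user coupling, no majorization, and no $N\to\infty$ bookkeeping for the exchange step is required.

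For comparison, the paper's own proof is a single sentence: since the collision probability in a slot depends only on the number of users transmitting there, giving each user more slots only inflates the per-slot load and hence the collision probability, so single-slot scheduling is optimal. That argument is in the same spirit as your step (i) but implicitly holds $L_2$ fixed; as stated it is not airtight either (for $L_2>\nu$ a heavier load can actually \emph{increase} the per-slot success probability, since $u\binom{N-u}{\nu-1}$ is unimodal rather than decreasing in $u$), and the bounded-by-the-maximum argument above is the clean way to repair both your proposal and the paper's sketch. As submitted, though, your proof is a plan with its decisive step missing, and the machinery you propose to fill it is aimed at the wrong random variable.
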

\begin{proof}
Since the collision probability is the same for all users and depends only on the number of other users transmitting in the same slot, by allocating more slots to each user, we increase the collision probability. Hence, single-slot scheduling is optimal in this case.
\end{proof}

\section{Numerical Results}\label{sec:numericalresults}

In this section, we present the numerical results, comparing our \ac{pima} protocol with the state-of-the-art \ac{oma} schedulers in a) i.i.d. activation scenario, b) correlated activation scenario, both with $N = 50$ users, and c) bursty activation scenario, with a large $N$.
Following the assumption of \textit{short packet transmission}, the number of symbols in each packet (slot duration) is  $T_{\rm s} = 10$ \ac{usd}. 
\footnote{The \ac{usd} is the inverse of the bandwidth if the Nyquist
sampling rate is used. All the lengths of the sequences, slots, and beacons are given in \ac{usd}.} Furthermore, a constant \ac{snr} of $10$~dB is assumed at the receiver. 
Note that while the analysis presented in Section~\ref{sec:frameeff} assumes a perfect estimation of the number of active users, the results shown below are obtained using the estimated $\hat{\nu}$. 

For performance comparison, we consider a) the standard \ac{tdma}, which provides fixed-duration frames of $N$ slots, with one user assigned per slot deterministically, b) a stabilized version of the \ac{saloha} protocol, and c) the \ac{cra}-2 protocol of \cite{Choi20On} with preambles of length $M_{\rm p} = N/2$ and $N$. 

\vspace{5pt}\noindent\emph{Stabilized \Acl{saloha}.} For the \ac{saloha} protocol, we consider Rivest's stabilized \ac{saloha} \cite[Chapter~4]{bertsekas21}, \cite{Rivest87Network}, where all users generating packets at slot $l$ are backlogged with the same probability of backlog. The backoff probability is computed at each user through a pseudo-Bayesian algorithm based on an estimate of the number of backlogged nodes $G(l)$ as 
\begin{equation}
    \alpha(l) = \min\left(1, \frac{1}{G(l)}\right),
 \end{equation}
where 
\begin{equation}
    G(l) = \begin{cases}
        G(l-1) + N\theta + (e-2)^{-1}\quad \text{if } c_l = 0,\\
        \max(N\theta, G(l-1)+ N\theta-1) \quad \text{if } c_l = 1,
    \end{cases}
\end{equation}
is the estimated number of backlogged users (with $G(0) = 0$) and $\theta = 1-e^{-\lambda}$ is the packet generation of probability at slot $l$.

\vspace{5pt}\noindent\emph{Modified \ac{cra}-2 Protocol.}
The \ac{cra}-2 protocol was proposed in \cite{Choi20On}. Similarly to \ac{pima}, each frame includes two sub-frames, and in the first sub-frame active users are identified with a temporary identifier, rather than just counted. To this end, each active user randomly chooses and transmits a sequence of complex symbols of length $M_{\rm p}$ ({\em preamble}), from a preamble pool known to both users and \ac{bs}. The \ac{bs} receives all preambles simultaneously and detects them.
Preambles are used as temporary identifiers for the active users, and the \ac{bs} schedules the data transmission in the second sub-frame by allocating one slot per each detected preamble. Here, we assume that preambles are orthogonal; therefore, the number of preambles equals their length $M_{\rm p}$. When $M_{\rm p} = N$, each user is uniquely assigned to a preamble, while for $M_{\rm p} = N/2$, active users choose their preamble for a pool of $M_{\rm p}$ orthogonal preambles uniformly at random. In the latter case, if two or more users transmit the same preamble (and the \ac{bs} detect it), they are assigned the same slot and collide.
For both schemes, we consider the probability of misdetection in the preamble in the first subframe $P_{\rm md}=0.1$. 
Note that in \cite{Choi20On} preambles are assumed to be non-orthogonal, and a \ac{cs}-based algorithm is applied. However, preamble detection in the presence of noise is not considered and a fixed misdetection probability is assumed. The impact on system performance of different \ac{cs} algorithms has been discussed for the case of multiple measurements (e.g., multiple antennas in \ac{bs}) in \cite{Choi18Stability}. Although on one hand \ac{cs}-based detection allows increasing the number of preambles and reducing the probability of collision, on the other hand, it also produces a high probability of misdetection when \ac{bs} is equipped with a single antenna and a large number of users are active \cite{Semper18Sparsity}.

Comparing \ac{pima} with \ac{cra}-2, we have two main differences: a) the first sub-frame is shorter for \ac{pima} than for \ac{cra}-2, and b) the second sub-frame has the same length for both approaches. Thus, on the one hand, overall the \ac{pima} frame is shorter, reducing the average number of packets generated in each frame, and this reduces the average number of packets accumulated in the buffers before transmission in the next frame. On the other hand, in \ac{pima} users may collide in the second sub-frame due to a non-orthogonal allocation, which increases (with respect to \ac{cra}-2) the number of packets to be re-transmitted in the next frame (thus increasing the average number of users in buffers). Lastly, a wrong counting of users in the \ac{pia} sub-frame or a wrong identification in the first \ac{cra}-2 sub-frame increases the collision probability of both schemes. The numerical results presented in this section will compare the performance of both schemes, taking into account the various described effects.

\vspace{5pt}\noindent\emph{Overhead Comparison.}
Since the acknowledgment overhead is neglected, both \ac{tdma} and \ac{saloha} do not entail any overhead. 
For both \ac{pima} and the preamble-based solutions, we assume that a 64-QAM modulation is used to modulate the \ac{sb}, and the list of the scheduling sequence is set to $J = 64$. Under the aforementioned assumptions, the total overhead induced by the \ac{pia} sub-frame is constant and equal to $L_1 = 3$ QAM symbols.
Instead, for preamble-based approaches, the overhead (in symbols) is given by the preamble with length $M_{\rm p}$ and the \ac{bs} feedback, which is typically longer than the \ac{sb} of \ac{pima}. In particular, assuming that $\nu$ preambles are detected by the \ac{bs}, the identifiers of these preambles are fed back in broadcast, in the order of slot allocation (for the subsequent data transmission), providing a total overhead of $M_{\rm p} + \nu$. 

The overhead of \ac{rb} is neglected and does not play any role in the performance comparison, as its length is comparable for all the considered schedulers.

\vspace{5pt}\noindent\emph{Performance metrics.}
For the i.i.d. and correlated activation case, performance is assessed in terms of both \textit{average frame efficiency} $\bar{\eta}$ and \textit{average latency} $\bar{D}$. The former metric is the average frame efficiency conditioned on $\hat{\nu}>0$, obtained by averaging over all the frames with at least one estimated active user. Instead, in the latter metric the average is computed among all successfully delivered packets of user $n$.
Moreover, in the case of i.i.d. activations, we also consider the packet dropping probability $P_{\rm drop}$, counting the packets dropped due to both collisions and replacements in the unit-length buffers when generations occur. 
This probability is constantly $0$, for all the compared schemes, in the correlated activation case, due to the possible retransmission and the infinite-length queues.

Finally, for bursty activations, the performance of the system is evaluated in terms of \textit{burst transmission time} $D_{\rm B}$, i.e., the time needed to transmit all the packets generated in a traffic burst.

\subsection{I.I.D. Activations}
We first report and discuss the results obtained for i.i.d. user activity and $N = 50$ users. In this activation scenario, retransmissions are not allowed. Therefore,  \ac{saloha}  does not include backlogging, and each user attempts the transmission immediately upon the packet generation, i.e., $\alpha(l) = 1, \forall l$.

\begin{figure}
    \centering
    \setlength\fwidth{0.82\columnwidth}
    \setlength\fheight{0.61\columnwidth}
    \definecolor{TDMAcolor}{rgb}{0.49400,0.18400,0.55600}%
\definecolor{SAL0HAcolor}{rgb}{0.46600,0.67400,0.18800}%
\definecolor{PIMAcolor}{rgb}{0.85000,0.32500,0.09800}%
\definecolor{CHOIcolor}{rgb}{0.00000,0.44700,0.74100}%
\pgfplotsset{every tick label/.append style={font=\footnotesize}}

\begin{tikzpicture}

\begin{axis}[%
    width=\fwidth,
    height=\fheight,
    at={(0\fwidth,0\fheight)},
    scale only axis,
    ylabel style={font=\normalsize},
    xlabel style={font=\normalsize},
    xmin=0.01,
    xmax=5,
    xlabel style={font=\color{white!15!black}},
    xlabel={Packet Generation Rate $\Lambda$ [pkt/slot]},
    ymin=0,
    ymax=0.6,
    yminorticks=true,
    ylabel style={font=\color{white!15!black}},
    ylabel={Avg. Frame Efficiency $\bar{\eta}$},
    axis background/.style={fill=white},
    xmajorgrids,
    ymajorgrids,
    legend style={at={(0.28,0.02)}, legend columns = 2, anchor=south west, legend cell align=left, align=left, font=\footnotesize, draw=white!15!black}
]

\addplot [color=TDMAcolor, very thick, mark size=2.0pt, mark=o, mark options={solid, TDMAcolor}]
  table[row sep=crcr]{%
0.01	0.0206083650190114\\
0.564444444444445	0.0586392845313029\\
1.11888888888889	0.106739512685256\\
1.67333333333333	0.15434448265083\\
2.22777777777778	0.200042500000001\\
2.78222222222222	0.243445\\
3.33666666666667	0.284482499999999\\
3.89111111111111	0.322852499999998\\
4.44555555555556	0.35922125\\
5	0.393695000000005\\
};
\addlegendentry{TDMA}

\addplot [color=CHOIcolor, dashed, very thick, mark size=2.0pt, mark=x, mark options={solid, CHOIcolor}]
  table[row sep=crcr]{%
0.01	0.270562478647662\\
0.564444444444445	0.280778458113135\\
1.11888888888889	0.291943602354874\\
1.67333333333333	0.305728718806296\\
2.22777777777778	0.320984606537988\\
2.78222222222222	0.338413798428694\\
3.33666666666667	0.357510219401353\\
3.89111111111111	0.378231478799828\\
4.44555555555556	0.400762928566462\\
5	0.424773212974161\\
};
\addlegendentry{CRA-2, $M_{\rm p} = N/2$}

\addplot [color=PIMAcolor, very thick, mark size=2.0pt, mark=star, mark options={solid, PIMAcolor}]
  table[row sep=crcr]{%
0.01	0.00357530019829897\\
0.564444444444445	0.166201494300838\\
1.11888888888889	0.279140857868302\\
1.67333333333333	0.360045106560001\\
2.22777777777778	0.420057033744072\\
2.78222222222222	0.465449241465\\
3.33666666666667	0.499443020741062\\
3.89111111111111	0.52519672753934\\
4.44555555555556	0.542905415037427\\
5	0.554455609248852\\
};
\addlegendentry{PIMA}

\addplot [color=CHOIcolor, very thick, mark size=2.0pt, mark=x, mark options={solid, CHOIcolor}]
  table[row sep=crcr]{%
0.01	0.16219540113662\\
0.564444444444445	0.182558339303243\\
1.11888888888889	0.205901887549965\\
1.67333333333333	0.2325718823371\\
2.22777777777778	0.263303316578361\\
2.78222222222222	0.298603131501192\\
3.33666666666667	0.336689131004079\\
3.89111111111111	0.378660339774514\\
4.44555555555556	0.423151521531646\\
5	0.468218213453245\\
};
\addlegendentry{CRA-2, $M_{\rm p} = N$}

\end{axis}
\end{tikzpicture}%
    \caption{Average frame efficiency versus the total packet generation rate for $N = 50$ and i.i.d. activations.}
    \label{fig:frameeffiid}
\end{figure}
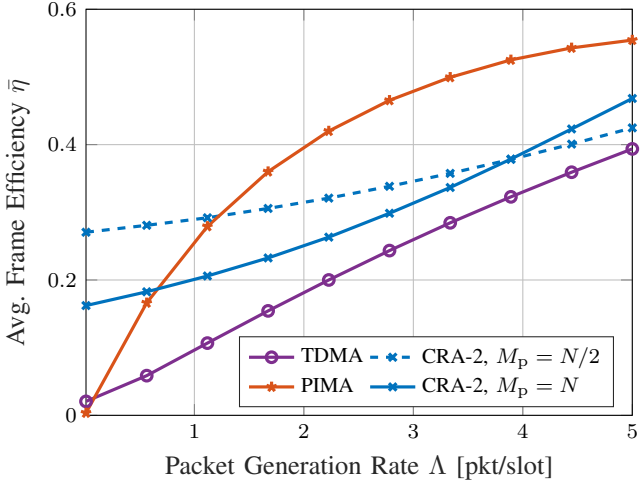

\begin{figure}
    \centering
    \setlength\fwidth{0.82\columnwidth}
    \setlength\fheight{0.61\columnwidth}
    \definecolor{TDMAcolor}{rgb}{0.49400,0.18400,0.55600}%
\definecolor{SAL0HAcolor}{rgb}{0.46600,0.67400,0.18800}%
\definecolor{PIMAcolor}{rgb}{0.85000,0.32500,0.09800}%
\definecolor{CHOIcolor}{rgb}{0.00000,0.44700,0.74100}%
\pgfplotsset{every tick label/.append style={font=\footnotesize}}

\begin{tikzpicture}

\begin{axis}[%
    width=\fwidth,
    height=\fheight,
    at={(0\fwidth,0\fheight)},
    scale only axis,
    ylabel style={font=\normalsize},
    xlabel style={font=\normalsize},
    xmin=0.01,
    xmax=5,
    xlabel style={font=\color{white!15!black}},
    xlabel={Packet Generation Rate $\Lambda$ [pkt/slot]},
    ymin=0,
    ymax=255,
    yminorticks=true,
    ylabel style={font=\color{white!15!black}},
    ylabel={Avg. Packet Latency $\bar{D}$ [usd]},
    axis background/.style={fill=white},
    xmajorgrids,
    ymajorgrids,
    legend style={at={(0.02,0.62)}, legend columns=2, anchor=south west, legend cell align=left, align=left, font=\footnotesize, draw=white!15!black}
]

\addplot [color=TDMAcolor, very thick, mark size=2.0pt, mark=o, mark options={solid, TDMAcolor}]
  table[row sep=crcr]{%
0.01	251.227985647908\\
0.564444444444445	248.0604371186\\
1.11888888888889	244.94371329855\\
1.67333333333333	243.018530361126\\
2.22777777777778	240.982807463035\\
2.78222222222222	238.773258599067\\
3.33666666666667	236.334930734437\\
3.89111111111111	233.577627111954\\
4.44555555555556	231.145956038172\\
5	228.975229260849\\
};
\addlegendentry{TDMA}

\addplot [color=CHOIcolor, dashed, very thick, mark size=2.0pt, mark=x, mark options={solid, CHOIcolor}]
  table[row sep=crcr]{%
0.01	53.2969568609287\\
0.564444444444445	55.1587204491713\\
1.11888888888889	57.6899329064134\\
1.67333333333333	60.4777195149448\\
2.22777777777778	63.5442946701488\\
2.78222222222222	67.0742123089087\\
3.33666666666667	70.9249204381788\\
3.89111111111111	75.382608290075\\
4.44555555555556	80.2073178961149\\
5	85.671460361471\\
};
\addlegendentry{CRA-2, $M_{\rm p} = N/2$}

\addplot [color=SAL0HAcolor, very thick, mark size=2.0pt, mark=square, mark options={solid, SAL0HAcolor}]
  table[row sep=crcr]{%
0.01	4.87368249279553\\
0.564444444444445	4.99102293511495\\
1.11888888888889	5.00076576417113\\
1.67333333333333	5.00221976781867\\
2.22777777777778	5.00861351326839\\
2.78222222222222	4.99900071992082\\
3.33666666666667	4.99388377632859\\
3.89111111111111	4.99012857914115\\
4.44555555555556	5.00054384528813\\
5	4.99072607396152\\
};
\addlegendentry{SALOHA}

\addplot [color=CHOIcolor, very thick, mark size=2.0pt, mark=x, mark options={solid, CHOIcolor}]
  table[row sep=crcr]{%
0.01	92.1535408026256\\
0.564444444444445	97.5432600627752\\
1.11888888888889	102.066517101103\\
1.67333333333333	107.302422234455\\
2.22777777777778	113.32352325406\\
2.78222222222222	120.280760719903\\
3.33666666666667	128.05134242275\\
3.89111111111111	136.811388435497\\
4.44555555555556	147.50110997601\\
5	159.481971480041\\
};
\addlegendentry{CRA-2, $M_{\rm p} = N$}

\addplot [color=PIMAcolor, very thick, mark size=2.0pt, mark=star, mark options={solid, PIMAcolor}]
  table[row sep=crcr]{%
0.01	16.0091918145361\\
0.564444444444445	16.3687679858046\\
1.11888888888889	16.7170637993806\\
1.67333333333333	17.1390638910017\\
2.22777777777778	17.6268605883218\\
2.78222222222222	18.1510769793659\\
3.33666666666667	18.794148588801\\
3.89111111111111	19.5083008887317\\
4.44555555555556	20.4069387486864\\
5	21.4677739032573\\
};
\addlegendentry{PIMA}

\end{axis}
\end{tikzpicture}%
    \caption{Average latency versus the total packet generation rate for $N = 50$ and i.i.d. activations.}
    \label{fig:latencyiid}
\end{figure}
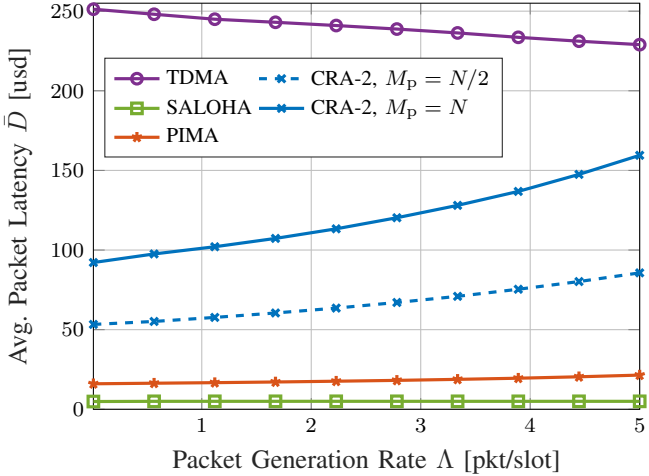

A comparison of the average frame efficiency achieved by each of the schemes is shown in Fig.~\ref{fig:frameeffiid}, as a function of the total packet generation rate $\Lambda$. While this metric cannot be defined for \ac{saloha}, as it does not divide time into frames, we observe that \ac{tdma}, adopting the constant frame length, provides a very low frame efficiency.  
Moreover, since only the frames wherein $\hat{\nu}>0$ are counted on average, \ac{pima} achieves a low frame efficiency at extremely low traffic due to the noise affecting the estimation of $\nu$. Indeed, with sporadic activity, very few frames see active users, and, in many of these rare cases, we have $\hat{\nu}>0$ due to the noise contribution.
As the traffic intensity increases, instead, \ac{pima} achieves the highest frame efficiency, outperforming both the \ac{tdma} and the preamble-based schemes, whose overhead severely affects performance.

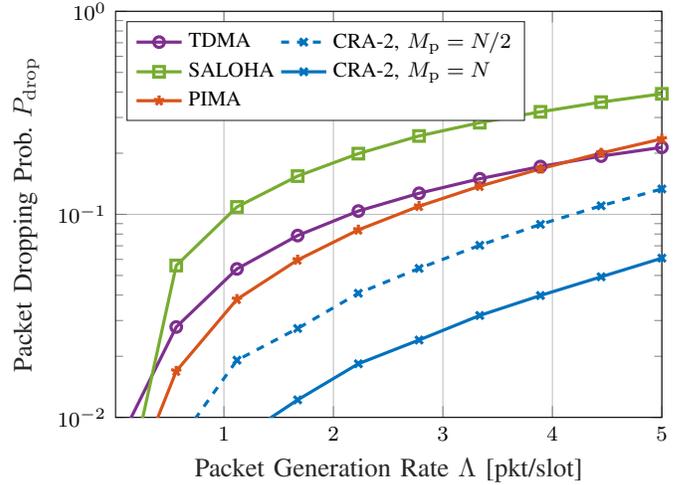
\begin{figure}
    \centering
    \setlength\fwidth{0.82\columnwidth}
    \setlength\fheight{0.61\columnwidth}
    \definecolor{TDMAcolor}{rgb}{0.49400,0.18400,0.55600}%
\definecolor{SAL0HAcolor}{rgb}{0.46600,0.67400,0.18800}%
\definecolor{PIMAcolor}{rgb}{0.85000,0.32500,0.09800}%
\definecolor{CHOIcolor}{rgb}{0.00000,0.44700,0.74100}%
\pgfplotsset{every tick label/.append style={font=\footnotesize}}

\begin{tikzpicture}

\begin{axis}[%
    width=\fwidth,
    height=\fheight,
    at={(0\fwidth,0\fheight)},
    scale only axis,
    ylabel style={font=\normalsize},
    xlabel style={font=\normalsize},
    xmin=0.01,
    xmax=5,
    ymode = log,
    xlabel style={font=\color{white!15!black}},
    xlabel={Packet Generation Rate  $\Lambda$ [pkt/slot]},
    ymin=0.01,
    ymax=1,
    yminorticks=true,
    ylabel style={font=\color{white!15!black}},
    ylabel={Packet Dropping Prob. $P_{\rm drop}$},
    axis background/.style={fill=white},
    xmajorgrids,
    ymajorgrids,
    legend style={at={(0.02,0.73)}, anchor=south west, legend columns = 2, legend cell align=left, align=left, font=\footnotesize, draw=white!15!black}
]

\addplot [color=TDMAcolor, very thick, mark size=2.0pt, mark=o, mark options={solid, TDMAcolor}]
  table[row sep=crcr]{%
0.01	0.007\\
0.564444444444445	0.0278735695301689\\
1.11888888888889	0.0538608835055219\\
1.67333333333333	0.078635834353447\\
2.22777777777778	0.103807449137878\\
2.78222222222222	0.126991384488493\\
3.33666666666667	0.149748383264405\\
3.89111111111111	0.171976686778274\\
4.44555555555556	0.193540511415935\\
5	0.213816791225469\\
};
\addlegendentry{TDMA}

\addplot [color=CHOIcolor, dashed, very thick, mark size=2.0pt, mark=x, mark options={solid, CHOIcolor}]
  table[row sep=crcr]{%
0.01	0\\
0.564444444444445	0.00728736325112147\\
1.11888888888889	0.0191302942707866\\
1.67333333333333	0.0274087228727087\\
2.22777777777778	0.0408569712461218\\
2.78222222222222	0.0541776046161457\\
3.33666666666667	0.0703950259117256\\
3.89111111111111	0.0892578603380042\\
4.44555555555556	0.110210320564702\\
5	0.133421001012782\\
};
\addlegendentry{CRA-2, $M_{\rm p} = N/2$}

\addplot [color=SAL0HAcolor, very thick, mark size=2.0pt, mark=square, mark options={solid, SAL0HAcolor}]
  table[row sep=crcr]{%
0.01	0.002460024600246\\
0.564444444444445	0.0559667889384321\\
1.11888888888889	0.108764625335367\\
1.67333333333333	0.154320388711832\\
2.22777777777778	0.199078248111417\\
2.78222222222222	0.243237064695573\\
3.33666666666667	0.282793377067798\\
3.89111111111111	0.319867662192358\\
4.44555555555556	0.357081121126065\\
5	0.392319059077249\\
};
\addlegendentry{SALOHA}

\addplot [color=CHOIcolor, very thick, mark size=2.0pt, mark=x, mark options={solid, CHOIcolor}]
  table[row sep=crcr]{%
0.01	0\\
0.564444444444445	0.00377158034528552\\
1.11888888888889	0.00772553729229777\\
1.67333333333333	0.0121958542502342\\
2.22777777777778	0.0183726324797874\\
2.78222222222222	0.0240442630568453\\
3.33666666666667	0.0317641084504968\\
3.89111111111111	0.0398313341203033\\
4.44555555555556	0.0492482017310313\\
5	0.0608632493037665\\
};
\addlegendentry{CRA-2, $M_{\rm p} = N$}

\addplot [color=PIMAcolor, very thick, mark size=2.0pt, mark=star, mark options={solid, PIMAcolor}]
  table[row sep=crcr]{%
0.01	0.003\\
0.564444444444445	0.0168944537518737\\
1.11888888888889	0.0381471875362086\\
1.67333333333333	0.0596087374409733\\
2.22777777777778	0.0838230025080616\\
2.78222222222222	0.109650160190939\\
3.33666666666667	0.137592042062699\\
3.89111111111111	0.167532147658583\\
4.44555555555556	0.200331438118078\\
5	0.235141561700777\\
};
\addlegendentry{PIMA}

\end{axis}
\end{tikzpicture}%
    \caption{Average packet dropping probability versus the total packet generation rate for $N = 50$ and i.i.d. activations.}
    \label{fig:pdrop}
\end{figure}

Figs.~\ref{fig:latencyiid} and \ref{fig:pdrop} shows the effect of the packet generation rate on the average latency and packet dropping probability, respectively. In this context, all packets generated during a frame transmission wait, on average, $N/2$ slots in low traffic conditions, therefore \ac{tdma} shows the highest latency. Still, latency decreases as the traffic increases, since the buffering delay is reduced by the new packets replacing the older ones in the queue. However, the dropping probability increases up to over $0.1$ in high-traffic conditions.
Instead, the \ac{saloha} scheduler attains the lowest latency in this scenario, transmitting all packets immediately upon generation. It then provides a lower bound on the latency, as all colliding packets are discarded and do not contribute to the evaluation. However, such dropped packets increase the dropping probability, which approaches $1$ for large values of $\Lambda$.
The \ac{cra}-2 scheduler with $N$ preambles, instead, is collision-free, and it drops a reduced number of buffered packets due to its shorter \ac{dt} sub-frame with respect to \ac{tdma}. Indeed, \ac{cra}-2 achieves the lowest $P_{\rm drop}$ among the considered approaches: this improvement comes at the cost of higher latency, due to the longer time needed for the first sub-frame.  
Lastly, while the already mentioned schemes drop packets due either to collision (\ac{saloha}) or new packet generations (\ac{tdma}, \ac{cra}-2 with $M_{\rm p}=N$), \ac{pima} and \ac{cra}-2 with $M_{\rm p}=N/2$ attempt to merge the advantages of the aforementioned solutions.
On one hand, \ac{pima} provides a higher collision probability than the preamble collision probability of \ac{cra}-2. On the other hand, collisions are compensated by a reduced dropping probability of newly generated packets, thanks to a shorter first sub-frame. Thus, while \ac{cra}-2 achieves a considerably lower dropping probability than \ac{pima}, its latency is higher than that of \ac{pima}, which guarantees close-to-minimum latency.
 
\subsection{Correlated Activations}
We now assess the performance of the general case of correlated user activations, assuming infinite buffer lengths and an infinite number of (re)transmission attempts for each packet. We consider a total number of $N = 50$ users, under a traffic intensity guaranteeing queues' stability, as obtained by simulations. Performance results are shown as a function of the traffic intensity  $0.01\leq\Lambda\leq3.5$, with $\Lambda =\lambda_T T_{\rm s}$.

\begin{figure}
    \centering
    \setlength\fwidth{0.82\columnwidth}
    \setlength\fheight{0.61\columnwidth}
    \definecolor{TDMAcolor}{rgb}{0.49400,0.18400,0.55600}%
\definecolor{SAL0HAcolor}{rgb}{0.46600,0.67400,0.18800}%
\definecolor{PIMAcolor}{rgb}{0.85000,0.32500,0.09800}%
\definecolor{CHOIcolor}{rgb}{0.00000,0.44700,0.74100}%
\pgfplotsset{every tick label/.append style={font=\footnotesize}}

\begin{tikzpicture}

\begin{axis}[%
    width=\fwidth,
    height=\fheight,
    at={(0\fwidth,0\fheight)},
    scale only axis,
    ylabel style={font=\normalsize},
    xlabel style={font=\normalsize},
    xmin=0.01,
    xmax=3.5,
    xlabel style={font=\color{white!15!black}},
    xlabel={Packet Generation Rate  $\Lambda$ [pkt/slot]},
    ymin=0,
    ymax=0.5,
    yminorticks=true,
    ylabel style={font=\color{white!15!black}},
    ylabel={Avg. Frame Efficiency $\bar{\eta}$},
    axis background/.style={fill=white},
    xmajorgrids,
    ymajorgrids,
    legend style={at={(0.28,0.02)}, legend columns = 2, anchor=south west, legend cell align=left, align=left, font=\footnotesize, draw=white!15!black}
]

\addplot [color=TDMAcolor, very thick, mark size=2.0pt, mark=o, mark options={solid, TDMAcolor}]
  table[row sep=crcr]{%
0.01	0.0206083650190114\\
0.397777777777778	0.0458858944954137\\
0.785555555555556	0.0800913589645996\\
1.17333333333333	0.118008142812404\\
1.56111111111111	0.156457278579912\\
1.94888888888889	0.194965\\
2.33666666666667	0.2342475\\
2.72444444444444	0.272952499999999\\
3.11222222222222	0.312137499999998\\
3.5	0.35097\\
};
\addlegendentry{TDMA}

\addplot [color=CHOIcolor, dashed, very thick, mark size=2.0pt, mark=x, mark options={solid, CHOIcolor}]
  table[row sep=crcr]{%
0.01	0.270562478647662\\
0.397777777777778	0.277738053261707\\
0.785555555555556	0.285912707689794\\
1.17333333333333	0.294623957918151\\
1.56111111111111	0.305167970915616\\
1.94888888888889	0.316321449957273\\
2.33666666666667	0.329418038016488\\
2.72444444444444	0.343929862774135\\
3.11222222222222	0.360190569234548\\
3.5	0.377952945024947\\
};
\addlegendentry{CRA-2, $M_{\rm p} = N/2$}

\addplot [color=PIMAcolor, very thick, mark size=2.0pt, mark=star, mark options={solid, PIMAcolor}]
  table[row sep=crcr]{%
0.01	0.00360971908766898\\
0.397777777777778	0.124129054854796\\
0.785555555555556	0.217373283010223\\
1.17333333333333	0.290714112293665\\
1.56111111111111	0.348818528050779\\
1.94888888888889	0.394265999050763\\
2.33666666666667	0.430698251728791\\
2.72444444444444	0.457806005705666\\
3.11222222222222	0.475314084553073\\
3.5	0.484138728312115\\
};
\addlegendentry{PIMA}

\addplot [color=CHOIcolor, very thick, mark size=2.0pt, mark=x, mark options={solid, CHOIcolor}]
  table[row sep=crcr]{%
0.01	0.16219540113662\\
0.397777777777778	0.175942957572621\\
0.785555555555556	0.191737628086311\\
1.17333333333333	0.208821646794821\\
1.56111111111111	0.22816334036082\\
1.94888888888889	0.249493731084609\\
2.33666666666667	0.273194220572147\\
2.72444444444444	0.299306850404487\\
3.11222222222222	0.32792047790415\\
3.5	0.358580450142592\\
};
\addlegendentry{CRA-2, $M_{\rm p} = N$}

\end{axis}
\end{tikzpicture}%
    \caption{Average frame efficiency versus the total packet generation rate for $N = 50$ and correlated activations.}
    \label{fig:corrframeeff}
\end{figure}
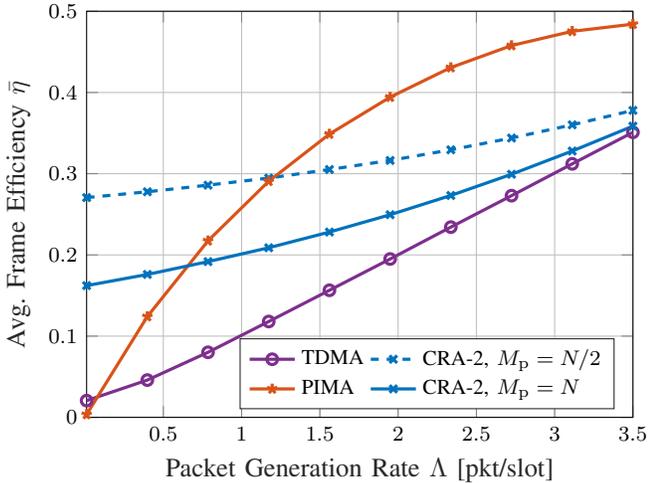

Firstly, Fig.~\ref{fig:corrframeeff} shows the average frame efficiency $\bar{\eta}$, as a function of the total packet generation rate. The performance is almost the same as the i.i.d. case. This is mostly due to the fact that, in stability conditions, few retransmissions are performed multiple times. While all the observations on Fig.~\ref{fig:frameeffiid} still hold, we observe a very slight degradation of \ac{pima}, as it is designed for i.i.d. activation statistics and does not take into account previous collisions.

\begin{figure}
    \centering
    \setlength\fwidth{0.82\columnwidth}
    \setlength\fheight{0.61\columnwidth}
    \definecolor{TDMAcolor}{rgb}{0.49400,0.18400,0.55600}%
\definecolor{SAL0HAcolor}{rgb}{0.46600,0.67400,0.18800}%
\definecolor{PIMAcolor}{rgb}{0.85000,0.32500,0.09800}%
\definecolor{CHOIcolor}{rgb}{0.00000,0.44700,0.74100}%
\pgfplotsset{every tick label/.append style={font=\footnotesize}}

\begin{tikzpicture}

\begin{axis}[%
    width=\fwidth,
    height=\fheight,
    at={(0\fwidth,0\fheight)},
    scale only axis,
    ylabel style={font=\normalsize},
    xlabel style={font=\normalsize},
    xmin=0.01,
    xmax=3.5,
    xlabel style={font=\color{white!15!black}},
    xlabel={Packet Generation Rate $\Lambda$ [pkt/slot]},
    ymin=0,
    ymax=400,
    yminorticks=true,
    ylabel style={font=\color{white!15!black}},
    ylabel={Avg. Packet Latency $\bar{D}$ [usd]},
    axis background/.style={fill=white},
    xmajorgrids,
    ymajorgrids,
    legend style={at={(0.02,0.37)}, legend columns = 2, anchor=south west, legend cell align=left, align=left, font=\footnotesize, draw=white!15!black}
]

\addplot [color=TDMAcolor, very thick, mark size=2.0pt, mark=o, mark options={solid, TDMAcolor}]
  table[row sep=crcr]{%
0.01	251.227985647908\\
0.397777777777778	259.589714713921\\
0.785555555555556	271.322562075595\\
1.17333333333333	282.462414472882\\
1.56111111111111	295.643101300088\\
1.94888888888889	310.348816378483\\
2.33666666666667	326.900231831119\\
2.72444444444444	344.792574330948\\
3.11222222222222	362.926753893595\\
3.5	384.936025333182\\
};
\addlegendentry{TDMA}

\addplot [color=CHOIcolor, dashed, very thick, mark size=2.0pt, mark=x, mark options={solid, CHOIcolor}]
  table[row sep=crcr]{%
0.01	53.2969568609287\\
0.397777777777778	54.8664412384461\\
0.785555555555556	56.8179284277435\\
1.17333333333333	59.1760049180619\\
1.56111111111111	61.7703363440487\\
1.94888888888889	64.7419823473204\\
2.33666666666667	68.2281254803582\\
2.72444444444444	72.4191439617707\\
3.11222222222222	77.3190567552466\\
3.5	83.6727682995812\\
};
\addlegendentry{CRA-2, $M_{\rm p} = N/2$}

\addplot [color=SAL0HAcolor, very thick, mark size=2.0pt, mark=square, mark options={solid, SAL0HAcolor}]
  table[row sep=crcr]{%
0.01	5.00412340928535\\
0.397777777777778	6.43882436715423\\
0.785555555555556	8.22986756202203\\
1.17333333333333	10.9656877074673\\
1.56111111111111	14.9642054013082\\
1.94888888888889	20.6177492642629\\
2.33666666666667	30.1864898208759\\
2.72444444444444	48.177417284734\\
3.11222222222222	91.6901324882264\\
3.5	316.497131675678\\
};
\addlegendentry{SALOHA}

\addplot [color=CHOIcolor, very thick, mark size=2.0pt, mark=x, mark options={solid, CHOIcolor}]
  table[row sep=crcr]{%
0.01	92.1535408026256\\
0.397777777777778	96.2978692162661\\
0.785555555555556	99.9144428683196\\
1.17333333333333	103.562265645203\\
1.56111111111111	107.532320198218\\
1.94888888888889	112.102101920332\\
2.33666666666667	117.494086747127\\
2.72444444444444	123.300886561516\\
3.11222222222222	130.158818262479\\
3.5	137.879750685086\\
};
\addlegendentry{CRA-2, $M_{\rm p} = N$}

\addplot [color=PIMAcolor, very thick, mark size=2.0pt, mark=star, mark options={solid, PIMAcolor}]
  table[row sep=crcr]{%
0.01	15.9438482271639\\
0.397777777777778	16.9423795066083\\
0.785555555555556	18.2608641564928\\
1.17333333333333	20.1327254206771\\
1.56111111111111	22.1554930650563\\
1.94888888888889	25.8438260645377\\
2.33666666666667	31.0934386202192\\
2.72444444444444	40.0914441525833\\
3.11222222222222	59.4397684312152\\
3.5	97.3214336519185\\
};
\addlegendentry{PIMA}

\end{axis}
\end{tikzpicture}%
    \caption{Average packet latency versus the total packet generation rate for $N = 50$ and correlated activations.}
    \label{fig:corrlatency}
\end{figure}
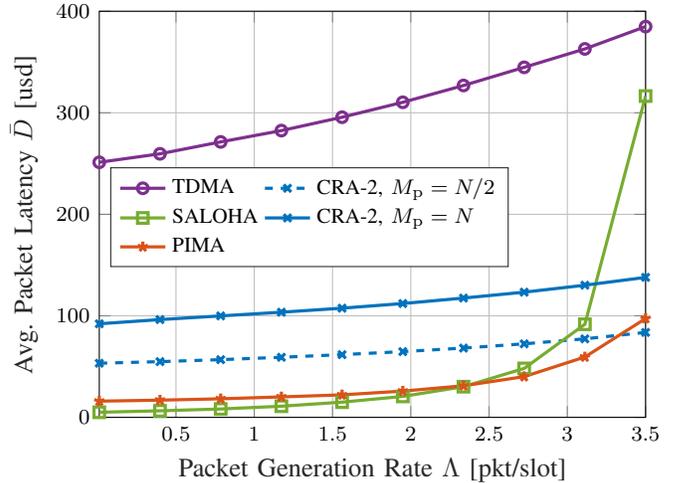

Second, Fig.~\ref{fig:corrlatency} shows the average packet latency as a function of the packet generation rate for $N = 50$ users. Still, the latency of \ac{tdma} is much higher than that of other schemes considered due to its frame length of $N$ (maximum). We also observe that, at low traffic, the \ac{pima} scheduler achieves extremely low latency, comparable to the minimum latency provided by \ac{saloha}. In higher traffic conditions, instead, the \ac{saloha} backlogging mechanism prevents users from transmitting their buffered packets immediately, thus increasing the average latency. This effect is mitigated in \ac{pima}, whose latency is reduced, due to its better ability to adapt to instantaneous traffic load.
For the preamble-based approach, instead, we observe that overhead plays a crucial role in overall latency, and shorter preambles yield better performance, while \ac{pima} is still outperforming \ac{cra}-2.

\subsection{Bursty Activations}
We now investigate the performance in the bursty activations scenario. We first assess the performance obtained for a single burst of intensity $\Lambda_{\rm B}$, and then discuss some constraints on the burst interarrival time $\tau_{\rm B}$.
As discussed in Section~\ref{sec:actstat}, here we assume that a random number of active users $\nu$, out of an arbitrarily large number of users in the system $N$, generate a single short packet to be transmitted in the following time slots. As we consider an arbitrary large $N$, the length of the \ac{dt} sub-frame in \ac{pima} is here derived according to Theorem~\ref{theorem1}. 

The number of packet generations in a burst follows a Poisson distribution $\Lambda_{\rm B}$ in the range $[10, 10000]$, thus, the average number of packet generations is $\Lambda_{\rm B}$.  
For comparison purposes, we consider \ac{cra}-2 with fixed preamble lengths $M_{\rm p} = 1000$ and $10000$. We also consider an ideal solution, wherein the length of the preamble is adapted to the average number of generation of packets, that is, $M_{\rm p} = \lambda_{B}$. Note that this ideal  solution is hardly implementable, as it requires different preamble pools based on the traffic generation rate.
Moreover, we do consider neither \ac{saloha}, as all the packets generated simultaneously collide with probability 1, nor \ac{tdma}, as a large $N$ yields a very long frame length.
Note that a longer list of random sequences (that is, a longer overhead) for the scheduling vector encoding (Section~\ref{overhead}) is necessary as the number of users in the system increases. In the following, we still consider a \ac{pia} sub-frame overhead of $L_1=3$ symbols, which can be easily accommodated by adopting a higher modulation order at the \ac{bs} for \ac{sb} transmission.

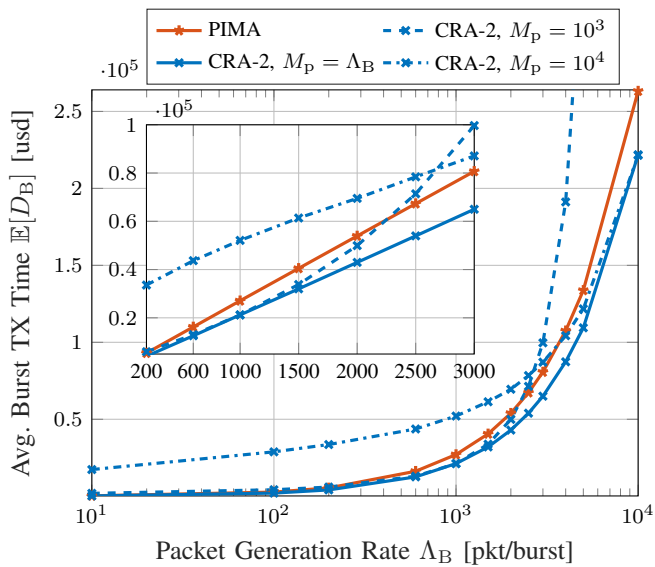
\begin{figure}
    \centering
    \setlength\fwidth{0.82\columnwidth}
    \setlength\fheight{0.61\columnwidth}
    \definecolor{TDMAcolor}{rgb}{0.49400,0.18400,0.55600}%
\definecolor{SAL0HAcolor}{rgb}{0.46600,0.67400,0.18800}%
\definecolor{PIMAcolor}{rgb}{0.85000,0.32500,0.09800}%
\definecolor{CHOIcolor}{rgb}{0.00000,0.44700,0.74100}%
\pgfplotsset{every tick label/.append style={font=\footnotesize}}

\begin{tikzpicture}

\begin{axis}[%
    width=\fwidth,
    height=\fheight,
    at={(0\fwidth,0\fheight)},
    scale only axis,
    ylabel style={font=\normalsize},
    xlabel style={font=\normalsize},
    xmin=10,
    xmax=10000,
    xlabel style={font=\color{white!15!black}},
    xlabel={Packet Generation Rate $\Lambda_{\rm B}$ [pkt/burst]},
    ymin=180,
    ymax=264000,
    xmode = log,
    yminorticks=true,
    ylabel style={font=\color{white!15!black}},
    ylabel={Avg. Burst TX Time $\mathbb{E}[D_{\rm B}]$ [usd]},
    axis background/.style={fill=white},
    xmajorgrids,
    ymajorgrids,
    legend style={at={(0.1,1.2)}, anchor=north west, legend columns = 2, legend cell align=left, align=left, font=\footnotesize, draw=white!15!black}
]

\addplot [color=PIMAcolor, very thick, mark size=2.0pt, mark=star, mark options={solid, PIMAcolor}]
  table[row sep=crcr]{%
10	250.55\\
100	2688.874\\
200	5393.91\\
600	16221.818\\
1000	27070.584\\
1500	40492.44\\
2000	53911.11\\
2500	67318.294\\
3000	80699.424\\
4000	107487.578\\
5000	133754.162\\
10000	263189.63\\
};
\addlegendentry{PIMA}

\addplot [color=CHOIcolor, very thick, dashed, mark size=2.0pt, mark=x, mark options={solid, CHOIcolor}]
  table[row sep=crcr]{%
10	1874.967\\
100	4209.608\\
200	5946.711\\
600	13016.434\\
1000	21206.875\\
1500	33809.353\\
2000	49919.535\\
2500	71399.569\\
3000	99667.545\\
4000	191158.656\\
5000	365266.88\\
10000	6678435.631\\
};
\addlegendentry{CRA-2, $M_{\rm p} = 10^3$}

\addplot [color=CHOIcolor, very thick, mark size=2.0pt, mark=x, mark options={solid, CHOIcolor}]
  table[row sep=crcr]{%
10	187.527\\
100	2022.781\\
200	4090.917\\
600	12571.146\\
1000	21215.342\\
1500	32001.882\\
2000	42968.833\\
2500	53948.615\\
3000	64982.554\\
4000	87325.893\\
5000	109405.028\\
10000	221837.329\\
};
\addlegendentry{CRA-2, $M_{\rm p} =\Lambda_{\rm B}$}


\addplot [color=CHOIcolor, very thick, dashdotted, mark size=2.0pt, mark=x, mark options={solid, CHOIcolor}]
  table[row sep=crcr]{%
10	17319.582\\
100	28828.8\\
200	33579.58\\
600	43685.485\\
1000	52086.622\\
1500	61370.411\\
2000	69496.591\\
2500	78418.539\\
3000	87095.58\\
4000	104293.903\\
5000	121731.424\\
10000	221482.821\\
};
\addlegendentry{CRA-2, $M_{\rm p} = 10^4$}

\end{axis}

\begin{axis}[%
width=0.6\fwidth,
height=0.42\fwidth,
at={(0.1\fwidth,0.26\fwidth)},
scale only axis,
xmin=200,
xmax=3000,
xlabel style={font=\color{white!15!black}},
ymin=5000,
ymax=100000,
xtick={200, 600, 1000, 1500, 2000, 2500, 3000},
xticklabels = {200, 600, 1000, 1500, 2000, 2500, 3000},
axis background/.style={fill=white},
xmajorgrids,
ymajorgrids
]

\addplot [color=PIMAcolor, very thick, mark size=2.0pt, mark=star, mark options={solid, PIMAcolor}]
  table[row sep=crcr]{%
  200	5393.91\\
600	16221.818\\
1000	27070.584\\
1500	40492.44\\
2000	53911.11\\
2500	67318.294\\
3000	80699.424\\
};
\addlegendentry{PIMA}

\addplot [color=CHOIcolor, very thick, dashed, mark size=2.0pt, mark=x, mark options={solid, CHOIcolor}]
  table[row sep=crcr]{%
  200	5946.711\\
600	13016.434\\
1000	21206.875\\
1500	33809.353\\
2000	49919.535\\
2500	71399.569\\
3000	99667.545\\
};
\addlegendentry{CRA-2, $M_{\rm p} = 10^3$}

\addplot [color=CHOIcolor, very thick, mark size=2.0pt, mark=x, mark options={solid, CHOIcolor}]
  table[row sep=crcr]{%
  200	4090.917\\
600	12571.146\\
1000	21215.342\\
1500	32001.882\\
2000	42968.833\\
2500	53948.615\\
3000	64982.554\\
};
\addlegendentry{CRA-2, $M_{\rm p} =\Lambda_{\rm B}$}

\addplot [color=CHOIcolor, very thick, dashdotted, mark size=2.0pt, mark=x, mark options={solid, CHOIcolor}]
  table[row sep=crcr]{%
200	33579.58\\
600	43685.485\\
1000	52086.622\\
1500	61370.411\\
2000	69496.591\\
2500	78418.539\\
3000	87095.58\\
};
\addlegendentry{CRA-2, $M_{\rm p} = 10^4$}

\legend{}
\end{axis}

\end{tikzpicture}%


    \caption{Average burst transmission time versus the packet generations rate (in log scale) in a bursty scenario. The zoomed plot reports results for $600\leq\Lambda_{\rm B}\leq3000$ (in linear scale).}
    \label{fig:burstlatency}
\end{figure}

Fig.~\ref{fig:burstlatency} shows the average burst transmission time as a function of the average number of packet generations. First, we observe a constant gap between \ac{pima} and the ideal preamble-based solution with $M_{\rm p} = \Lambda_{\rm B}$, while fixed-length preambles achieve better performance when the number of active users is comparable to the preamble length. In particular, the \ac{cra}-2 solution with $M_{\rm p} = 1000$ preambles achieves a lower burst transmission time only for $400\leq\Lambda_{\rm B}\leq 2000$, while at least $\Lambda_{\rm B} = 4000$ is needed when $M_{\rm p} = 1000$. 
For a low average number of packet generations, \ac{pima} achieves the best performance due to its reduced overhead. For faster packet generations, both fixed-length preamble approaches suffer from preamble collisions, which implies a much higher packet transmission time due to retransmissions. 
Therefore, while \ac{pima} is able to adapt to all traffic conditions without any change in the \ac{pia} sub-frame, preamble-based approaches should adopt different preamble pools depending on the traffic intensity in order to achieve nice performance.    

\begin{figure}
    \centering
    \setlength\fwidth{0.82\columnwidth}
    \setlength\fheight{0.61\columnwidth}
    \definecolor{PIMAcolor}{rgb}{0.85000,0.32500,0.09800}%
\definecolor{CHOIcolor}{rgb}{0.00000,0.44700,0.74100}%
\pgfplotsset{every tick label/.append style={font=\footnotesize}}

\begin{tikzpicture}

\begin{axis}[%
    width=\fwidth,
    height=\fheight,
    at={(0\fwidth,0\fheight)},
    scale only axis,
    ylabel style={font=\normalsize},
    xlabel style={font=\normalsize},
    xlabel style={font=\color{white!15!black}},
    xlabel={Burst TX Time $D_{\rm B}$ [usd]},
    ymin=1e-2,
    ymax=1,
    ymode = log,
    yminorticks=true,
    ylabel style={font=\color{white!15!black}},
    ylabel={${\rm ECCDF}(D_{\rm B})$},
    axis background/.style={fill=white},
    xmajorgrids,
    ymajorgrids,
    legend style={at={(0.05,1.02)}, legend columns = 2,  anchor=south west, legend cell align=left, align=left, font=\footnotesize, draw=white!15!black}
]

\addplot [color=PIMAcolor, very thick]
  table[row sep=crcr]{%
13498	1\\
14254	0.99\\
14458	0.98\\
14584	0.97\\
14732	0.96\\
14842	0.949\\
14900	0.938\\
14946	0.927\\
15044	0.915\\
15084	0.903\\
15174	0.89\\
15210	0.878\\
15250	0.867\\
15296	0.856\\
15342	0.844\\
15386	0.833\\
15426	0.82\\
15478	0.806\\
15536	0.791\\
15578	0.775\\
15604	0.763\\
15636	0.75\\
15674	0.737\\
15716	0.724\\
15748	0.709\\
15784	0.696\\
15810	0.682\\
15840	0.67\\
15872	0.657\\
15908	0.643\\
15934	0.63\\
15962	0.616\\
15984	0.6\\
16026	0.578\\
16064	0.559\\
16106	0.542\\
16134	0.527\\
16162	0.513\\
16194	0.497\\
16232	0.477\\
16258	0.463\\
16308	0.448\\
16332	0.433\\
16372	0.415999999999999\\
16398	0.399\\
16444	0.383\\
16470	0.367\\
16510	0.353999999999999\\
16554	0.341999999999999\\
16592	0.331\\
16626	0.313\\
16662	0.296999999999999\\
16700	0.280999999999999\\
16736	0.267999999999999\\
16784	0.255999999999999\\
16812	0.243999999999999\\
16844	0.229999999999999\\
16882	0.215\\
16918	0.198\\
16980	0.186\\
17046	0.172\\
17100	0.157\\
17150	0.147\\
17198	0.131\\
17294	0.12\\
17340	0.108\\
17382	0.0939999999999998\\
17482	0.0809999999999997\\
17580	0.0699999999999998\\
17660	0.0589999999999998\\
17744	0.0459999999999999\\
17856	0.0349999999999999\\
17954	0.0249999999999999\\
18216	0.014\\
18636	0.002\\
};
\addlegendentry{PIMA}

\addplot [color=CHOIcolor, very thick]
  table[row sep=crcr]{%
10216	1\\
10953	0.988\\
11217	0.973\\
11360	0.955\\
11448	0.939\\
11575	0.916\\
11657	0.895\\
11756	0.87\\
11822	0.84\\
11888	0.813\\
11949	0.786\\
12004	0.761\\
12070	0.728\\
12130	0.702\\
12191	0.678\\
12246	0.654\\
12301	0.632\\
12361	0.6\\
12416	0.57\\
12477	0.544\\
12543	0.506\\
12598	0.476\\
12659	0.444\\
12713	0.423\\
12774	0.393\\
12840	0.349\\
12912	0.319\\
12972	0.291\\
13033	0.264\\
13099	0.242\\
13165	0.211\\
13215	0.186\\
13269	0.164\\
13330	0.138\\
13407	0.119\\
13473	0.105\\
13566	0.0909999999999999\\
13633	0.0759999999999998\\
13726	0.0629999999999998\\
13803	0.0489999999999998\\
13907	0.0359999999999999\\
14078	0.0259999999999999\\
14309	0.0149999999999999\\
14837	0.005\\
};
\addlegendentry{CRA-2, $M_{\rm p} =\Lambda_{\rm B}$}

\addplot [color=CHOIcolor, very thick, dashed]
  table[row sep=crcr]{%
10765	1\\
11326	0.988\\
11513	0.972\\
11645	0.954\\
11766	0.929\\
11865	0.905\\
11975	0.865\\
12085	0.846\\
12173	0.827\\
12238	0.808\\
12305	0.788\\
12382	0.769\\
12469	0.756\\
12547	0.73\\
12635	0.711\\
12711	0.672\\
12799	0.634\\
12887	0.574\\
12965	0.529\\
13063	0.487\\
13173	0.421\\
13283	0.361\\
13382	0.306\\
13459	0.27\\
13536	0.249\\
13624	0.215\\
13690	0.198\\
13755	0.174\\
13843	0.153\\
13909	0.135\\
13987	0.118\\
14064	0.101\\
14140	0.0789999999999997\\
14272	0.0599999999999998\\
14393	0.0439999999999998\\
14558	0.0319999999999999\\
14734	0.0179999999999999\\
15196	0.00800000000000001\\
};
\addlegendentry{CRA-2, $M_{\rm p} = 10^3$}

\addplot [color=CHOIcolor, very thick, dashdotted]
  table[row sep=crcr]{%
35764	1\\
36160	0.988\\
36336	0.968\\
36457	0.932\\
36567	0.89\\
36677	0.824\\
36787	0.773\\
36897	0.719\\
37007	0.682\\
37128	0.642\\
37260	0.627\\
37458	0.615\\
46105	0.604\\
46259	0.586\\
46391	0.562\\
46501	0.512\\
46611	0.449\\
46721	0.369\\
46831	0.289\\
46941	0.221\\
47051	0.159\\
47161	0.117\\
47282	0.0929999999999997\\
47744	0.0769999999999998\\
56545	0.0599999999999998\\
56721	0.0389999999999999\\
56897	0.0259999999999999\\
57139	0.0149999999999999\\
66556	0.004\\
};
\addlegendentry{CRA-2, $M_{\rm p} = 10^4$}

\addplot [color=PIMAcolor, very thick]
  table[row sep=crcr]{%
75346	1\\
76428	0.99\\
76854	0.98\\
77140	0.969\\
77606	0.959\\
77792	0.948\\
77940	0.937\\
78094	0.927\\
78230	0.915\\
78314	0.905\\
78458	0.894\\
78534	0.884\\
78614	0.873\\
78742	0.862\\
78854	0.85\\
78956	0.839\\
79052	0.826\\
79094	0.815\\
79204	0.804\\
79284	0.792\\
79352	0.78\\
79410	0.768\\
79466	0.758\\
79528	0.747\\
79572	0.735\\
79656	0.722\\
79714	0.71\\
79754	0.697\\
79816	0.687\\
79862	0.675\\
79904	0.663\\
79964	0.652\\
80022	0.641\\
80064	0.627\\
80116	0.613\\
80182	0.602\\
80244	0.588\\
80296	0.578\\
80364	0.567\\
80440	0.557\\
80478	0.542\\
80522	0.529\\
80580	0.516\\
80632	0.506\\
80668	0.494\\
80708	0.48\\
80758	0.469\\
80810	0.458\\
80860	0.444\\
80914	0.433\\
80978	0.42\\
81018	0.407\\
81124	0.391\\
81220	0.378\\
81300	0.365\\
81372	0.355\\
81428	0.343\\
81484	0.332\\
81568	0.321\\
81624	0.309999999999999\\
81674	0.297999999999999\\
81758	0.285999999999999\\
81806	0.272999999999999\\
81874	0.260999999999999\\
81954	0.246999999999999\\
82028	0.235\\
82084	0.222\\
82146	0.209\\
82188	0.197\\
82308	0.185\\
82418	0.171999999999999\\
82536	0.159\\
82632	0.149\\
82712	0.138\\
82778	0.128\\
82840	0.116\\
82962	0.103\\
83124	0.0929999999999997\\
83336	0.0829999999999997\\
83452	0.0729999999999997\\
83614	0.0619999999999998\\
83754	0.0519999999999998\\
83914	0.0419999999999998\\
84118	0.0299999999999999\\
84406	0.0199999999999999\\
84874	0.01\\
85876	0\\
};

\addplot [color=CHOIcolor, very thick]
  table[row sep=crcr]{%
58516	1\\
61142	0.99\\
61406	0.977\\
61593	0.964\\
61802	0.948\\
61989	0.933\\
62143	0.917\\
62297	0.893\\
62429	0.878\\
62572	0.859\\
62693	0.831\\
62825	0.806\\
62946	0.787\\
63056	0.768\\
63166	0.743\\
63276	0.726\\
63397	0.704\\
63526	0.683\\
63639	0.656\\
63738	0.627\\
63848	0.605\\
63980	0.578\\
64087	0.555\\
64211	0.537\\
64373	0.516\\
64538	0.5\\
64662	0.489\\
64758	0.477\\
64838	0.462\\
64959	0.447\\
65121	0.435\\
65286	0.42\\
65407	0.406\\
65583	0.39\\
65704	0.374\\
65836	0.358\\
65968	0.346\\
66100	0.331\\
66243	0.314\\
66386	0.289999999999999\\
66529	0.274999999999999\\
66661	0.250999999999999\\
66782	0.231999999999999\\
66925	0.212999999999999\\
67090	0.199\\
67244	0.175\\
67398	0.159\\
67519	0.143\\
67706	0.122\\
67890	0.105\\
68091	0.0889999999999997\\
68388	0.0789999999999998\\
68729	0.0679999999999998\\
69155	0.0569999999999998\\
69485	0.0459999999999998\\
69826	0.0319999999999999\\
70200	0.0199999999999999\\
70904	0.01\\
74671	0\\
};

\addplot [color=CHOIcolor, very thick, dashed]
  table[row sep=crcr]{%
86019	1\\
91266	0.989\\
92167	0.979\\
93113	0.969\\
93575	0.959\\
93927	0.947\\
94179	0.936\\
94455	0.926\\
94598	0.915\\
94839	0.903\\
94994	0.893\\
95225	0.883\\
95531	0.868\\
95730	0.855\\
95897	0.844\\
96038	0.834\\
96259	0.82\\
96424	0.81\\
96512	0.8\\
96621	0.788\\
96819	0.775\\
96942	0.763\\
97061	0.751\\
97227	0.738\\
97414	0.721\\
97568	0.706\\
97656	0.694\\
97809	0.682\\
97931	0.669\\
98063	0.658\\
98260	0.643\\
98349	0.632\\
98470	0.621\\
98591	0.606\\
98821	0.593\\
98921	0.581\\
99041	0.567\\
99141	0.553\\
99251	0.543\\
99360	0.53\\
99481	0.519\\
99590	0.505\\
99658	0.494\\
99778	0.478\\
99921	0.464\\
100052	0.451\\
100241	0.435\\
100329	0.417\\
100427	0.402\\
100549	0.387\\
100758	0.377\\
100890	0.362\\
101033	0.35\\
101176	0.335\\
101307	0.324\\
101439	0.314\\
101517	0.3\\
101671	0.286999999999999\\
101802	0.274\\
102000	0.260999999999999\\
102132	0.248999999999999\\
102243	0.236999999999999\\
102407	0.225\\
102582	0.213\\
102704	0.202\\
102858	0.189\\
103023	0.179\\
103265	0.168\\
103442	0.157\\
103694	0.146\\
103826	0.135\\
104002	0.124\\
104167	0.11\\
104486	0.0989999999999996\\
104772	0.0879999999999997\\
105036	0.0759999999999997\\
105310	0.0659999999999998\\
105640	0.0559999999999998\\
106037	0.0439999999999998\\
106356	0.0339999999999999\\
107081	0.0239999999999999\\
108193	0.014\\
110326	0.004\\
};

\addplot [color=CHOIcolor, very thick, dashdotted]
  table[row sep=crcr]{%
75112	1\\
75673	0.99\\
75948	0.977\\
76212	0.964\\
76454	0.947\\
76597	0.932\\
76828	0.919\\
76993	0.898\\
77202	0.88\\
77411	0.866\\
77862	0.853\\
79050	0.841\\
85156	0.83\\
85464	0.814\\
85629	0.793\\
85761	0.769\\
85893	0.741\\
86025	0.714\\
86135	0.687\\
86245	0.656\\
86355	0.622\\
86465	0.596\\
86575	0.556\\
86685	0.519\\
86795	0.467\\
86905	0.433\\
87015	0.395\\
87125	0.355\\
87257	0.325\\
87367	0.293999999999999\\
87488	0.264\\
87598	0.234\\
87719	0.217\\
87906	0.199\\
88280	0.187\\
88951	0.176\\
95596	0.165\\
95904	0.155\\
96124	0.143\\
96333	0.129\\
96553	0.113\\
96729	0.0979999999999998\\
96927	0.0849999999999997\\
97147	0.0719999999999998\\
97345	0.0529999999999998\\
97609	0.0399999999999999\\
98071	0.0279999999999999\\
105145	0.0179999999999999\\
106905	0.00800000000000001\\
};
\end{axis}

\draw [dashed] (2,2.1) ellipse (1.5cm and 0.35cm);
\node[] at (2,2.7){\footnotesize $\Lambda_{\rm B} = 600$};
\draw [dashed] (5.3,1.2) ellipse (1.3cm and 0.35cm);
\node[] at (5.3,0.6){\footnotesize $\Lambda_{\rm B} = 3000$};

\end{tikzpicture}%
    \caption{\ac{eccdf} of the burst transmission time.}
    \label{fig:burstccdf}
\end{figure}
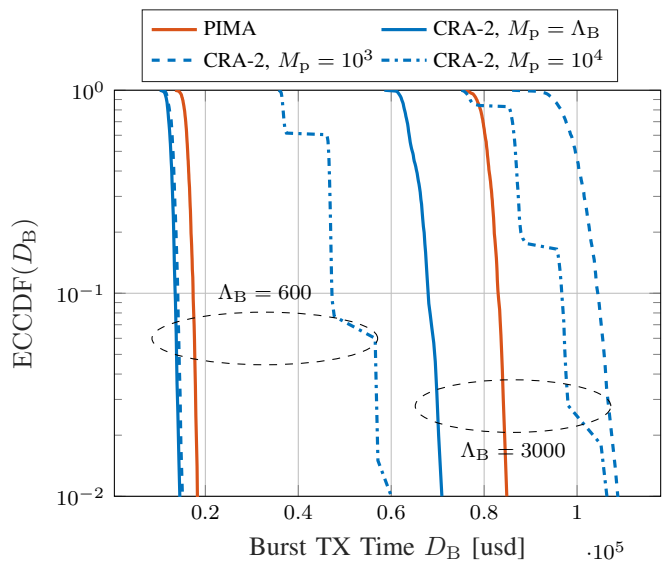

Finally, \ac{eccdf} of the burst transmission time is shown by Fig.~\ref{fig:burstccdf} for $\Lambda_{\rm B} = 600$ and $3000$. The results confirm the comments on Fig.~\ref{fig:burstlatency}, with the ideal case $M_{\rm p} = \Lambda_{\rm B}$ always attaining the lowest transmission time and the fixed-preamble approaches outperforming \ac{pima} only for a preamble length comparable to the number of packet generations.
Note that this last figure gives an indication of the minimum burst interarrival time $\tau_{\rm B}$. To minimize the probability of overlap between two traffic bursts, in particular, $\tau_{\rm B}$ must be large enough to minimize the probability of having new arrivals while there are still previously generated packets to be transmitted. 
Setting a threshold on the overlap probability, the minimum $\tau_{\rm B}$ allowing one to satisfy such threshold can be easily retrieved by \ac{eccdf}.  
For example, from Fig.~\ref{fig:burstccdf}, the probability of overlap of \ac{pima} is $10^{-2}$ if $\tau_{\rm B}=D_{\rm B}\approx19\cdot10^3$ \ac{usd} for $\Lambda_{\rm B} = 600$, and if $\tau_{\rm B}\approx85\cdot10^3$ \ac{usd} for $\Lambda_{\rm B} = 3000$. 
In this comparison, \ac{pima} is shown to be effective in minimizing burst transmission time, being very close to the \ac{cra}-2 solutions with $M_{\rm p} = 1000$ and $M_{\rm p} = \Lambda{\rm B}$ in the low traffic scenario, while outperforming both fixed-length preamble solutions for $\Lambda_{\rm B} = 3000$.

\section{Conclusions}\label{sec:conclusions}
We have proposed the \ac{pima} protocol, a semi-\ac{gf} coordinated multiple access scheme for short packet transmission, based on the knowledge of the number of users that have packets to transmit. \Ac{pima} organizes time into frames, and each frame includes a preliminary phase (the \ac{pia} sub-frame), where \ac{bs} estimates the number of active users, and a second phase (the \ac{dt} sub-frame), wherein the actual data transmissions are carried out.
We derive the optimal scheduling in the case of i.i.d. activations and assess its performance for different users' activation statistics. 
The numerical results obtained in such scenarios show that \ac{pima} is able to achieve extremely low latency with respect to state-of-the-art orthogonal multiple access solutions due to its low overhead and is able to adapt to different activation conditions by exploiting the partial knowledge of the instantaneous traffic load. 

\balance

\bibliographystyle{IEEEtran}
\bibliography{IEEEabrv, Bibliography.bib}

\end{document}